\spnewtheorem{numclaim}{Claim}{\itshape}{\rmfamily}
\newcommand{\chair}[1]{\(Ch(#1)\)}
\newcommand{\yes}{\textsc{Yes}\xspace}
\newcommand{\no}{\textsc{No}\xspace}
\newcommand{\yesinstance}{\yes-instance\xspace}
\newcommand{\classFPT}{\textsf{FPT}\xspace}
\newcommand{\classP}{\textsf{P}\xspace}
\newcommand{\classNP}{\textsf{NP}\xspace}
\newcommand{\NPHard}{\textsf{NP-hard}\xspace}
\newcommand{\Oh}{\ensuremath{\mathcal{O}}}
\newcommand{\OhStar}{\ensuremath{\mathcal{O}^{\star}}}
\newcommand{\ProblemFormat}[1]{\textsc{#1}}
\newcommand{\ProblemIndex}[1]{\index{problem!\ProblemFormat{#1}}}
\newcommand{\ProblemName}[1]{\ProblemFormat{#1}\ProblemIndex{#1}{}\xspace}
\newcommand{\probMR}{\ProblemName{Monopolar Recognition}}
\newcommand{\probLMP}{\ProblemName{List Monopolar Partition}}
\newcommand{\probCFVD}{\ProblemName{Claw-Free Vertex Deletion}}
\newcommand{\probCFED}{\ProblemName{Claw-Free Edge Deletion}}
\newcommand{\probME}{\ProblemName{Monopolar Extension}}
\newcommand{\probTwoSAT}{\ProblemName{\ensuremath{\mathit{2}}-SAT}}
\let\oldproof\proof
\let\oldendproof\endproof
\renewenvironment{proof}
{\oldproof}
{\hfill\qed\oldendproof}
\newcommand{\defdecproblem}[3]{
  \vspace{1mm}
  \noindent\fbox{
    \begin{minipage}{0.96\textwidth}
      \begin{tabular*}{\textwidth}{@{\extracolsep{\fill}}lr} #1 \\ \end{tabular*}
      {\bf{Input:}} #2  \\
      {\bf{Question:}} #3
    \end{minipage}
  }\vspace{1mm}}
\newcommand{\defparproblem}[4]{
  \vspace{1mm}
  \noindent\fbox{
    \begin{minipage}{0.96\textwidth}
      \begin{tabular*}{\textwidth}{@{\extracolsep{\fill}}lr} #1  & {\bf{Parameter:}} #3 \\ \end{tabular*}
      {\bf{Input:}} #2  \\
      {\bf{Question:}} #4
    \end{minipage}
  }\vspace{1mm}}
\newcommand{\repeattheorem}[1]{%
  \begingroup
  \renewcommand{\thetheorem}{\ref{#1}}%
  \expandafter\expandafter\expandafter\theorem
  \csname reptheorem@#1\endcsname
  \endtheorem
  \endgroup
}
\xdef\csname reptheorem@#1\endcsname{%
    \unexpanded\expandafter{\BODY}%
  }%
\unskip\label{#1}\endtheorem
\newcommand{\repeatlemma}[1]{%
  \begingroup
  \renewcommand{\thelemma}{\ref{#1}}%
  \expandafter\expandafter\expandafter\lemma
  \csname replemma@#1\endcsname
  \endlemma
  \endgroup
}
\xdef\csname replemma@#1\endcsname{%
    \unexpanded\expandafter{\BODY}%
  }%
\unskip\label{#1}\endlemma
\algrenewcommand\algorithmicrequire{\textbf{Input:}}
\algrenewcommand\algorithmicensure{\textbf{Output:}}
\begin{document}

\title{Faster Algorithms for Graph Monopolarity \thanks{An extended abstract of this work has been accepted at WG 2025 (51st International Workshop on Graph-Theoretic Concepts in Computer Science).}}


\titlerunning{Faster Algorithms for Graph Monopolarity}

\author{Geevarghese Philip\inst{1}\orcidID{0000-0003-0717-7303} \and Shrinidhi Teganahally Sridhara\inst{2}\orcidID{0000-0002-0288-8598}}

\authorrunning{G. Philip and S. T. Sridhara}

\institute{Chennai Mathematical Institute, India and UMI ReLaX \\ \email{gphilip@cmi.ac.in} \\
\url{https://www.cmi.ac.in/~gphilip} \and Univ. Bordeaux, CNRS, Bordeaux INP, LaBRI, UMR 5800, [F-33400 Talence], France
\email{shrinidhi.teganahally-sridhara@u-bordeaux.fr} 
}

\maketitle

\begin{abstract}
  A graph $G = (V,E)$ is \emph{monopolar} if its vertex set admits a partition
  $V = (C \uplus{} I)$ where \(G[C]\) is a \emph{cluster graph} and $I$ is an
  \emph{independent set} in $G$; this is a \emph{monopolar partition} of
  \(G\). 
  The \probMR problem---deciding whether an input graph is monopolar---is known
  to be \NPHard in very restricted graph classes such as sub-cubic planar
  graphs.


  We derive a polynomial-time algorithm that takes (i) a graph \(G=(V,E)\) and
  (ii) a vertex modulator \(S\) of \(G\) to chair-free graphs as inputs, and
  checks whether \(G\) has a monopolar partition \(V=(C\uplus{}I)\) where set
  \(S\) is contained in the cluster part. We build on this algorithm to develop
  fast exact exponential-time and parameterized algorithms for \probMR.

  Our exact algorithm solves \probMR in \(\OhStar(1.3734^{n})\) time on input
  graphs with \(n\) vertices, where the \(\OhStar()\) notation hides polynomial
  factors. In fact, we solve the more general problems \probME and \probLMP
  in \(\OhStar(1.3734^{n})\) time. These are the first improvements over the
  trivial \(\OhStar(2^{n})\)-time algorithms for all these problems. It is known
  that---assuming ETH---these problems cannot be solved in \(\OhStar(2^{o(n)})\)
  time.

  Our \classFPT algorithms solve \probMR in \(\OhStar(3.076^{k_{v}})\) and
  \(\OhStar(2.253^{k_{e}})\) time where \(k_{v}\) and \(k_{e}\) are,
  respectively, the sizes of the smallest vertex and edge modulators of the
  input graph to claw-free graphs. These results are a significant addition to
  the small number of \classFPT algorithms currently known for \probMR.
\end{abstract}




\keywords{Graph Monopolarity \and Fixed-parameter tractability \and  Exponential-time algorithms}
\section{Introduction}\label{sec:introduction}
In this work we derive fast exponential-time and fixed-parameter tractable
(\classFPT) algorithms for recognizing monopolar graphs. All our graphs are
finite, undirected, and simple.
Recall that a graph \(H\) is a \emph{cluster graph} if each connected component
of \(H\) is a complete graph. A partition of the vertex set of a graph
\(G=(V,E)\) into two sets \(C\) and \(I\) is a \emph{monopolar partition} of
\(G\) if the subgraph \(G[C]\) induced by set \(C\) is a cluster graph and the
set \(I\) is an independent set in \(G\). A graph \(G\) is \emph{monopolar} if
it has a monopolar partition. The primary focus of this work is on the
algorithmic problem of recognizing monopolar graphs.

\defdecproblem{\probMR}
{An undirected graph \(G=(V,E)\) on \(n\) vertices.}
{
  If \(G\) is monopolar, then output 
  Yes. 
  Otherwise, output No. }

Monopolar graphs are a special case of the more general class of \emph{polar
  graphs} \cite{ekim2008polarity,ekim2008polar}. A graph is \emph{polar} if its
vertex set can be partitioned into two parts, one of which induces a cluster
graph, and the other a \emph{co-cluster} graph---which is the edge-complement of
a cluster graph; equivalently, a complete multipartite graph. 
Polar graphs were introduced as a common generalization of bipartite graphs and
split graphs~\cite{tyshkevich1985decomposition}. Note that monopolar graphs also
generalize both bipartite graphs (\(C\) is an independent set) and split graphs
(\(C\) is a single clique). While both split graphs and bipartite graphs can be
recognized in polynomial time, \probMR is \NPHard in general
graphs~\cite{farrugia2004vertex}, and in various restricted graph classes such
as triangle-free graphs~\cite{churchley2014polarity}, \(3\)-colourable
graphs~\cite{le2011recognizing}, and triangle-free planar graphs of maximum
degree \(3\)~\cite{le2011recognizing}. We propose fast exponential-time and
\classFPT algorithms that solve \probMR in general graphs. Each of our
algorithms also outputs a monopolar partition, if the input graph is monopolar.

\paragraph*{Exact exponential-time algorithms.} Our first result is a fast exact
exponential-time algorithm for \probMR. Note that we can easily solve \probMR in
\(\OhStar(2^{n})\) time (where the \(\OhStar()\) notation hides polynomial
factors) 
by guessing the correct partition of \(V\) into the sets \(C\) and \(I\). We
significantly improve the running time to \(\OhStar(1.3734^{n})\).
\begin{reptheorem}{repthm:exactAlgo}
  \label{thm:exactAlgo} {\normalfont\probMR} can be solved in
  \(\OhStar(1.3734^{n})\) time: There is an algorithm which takes a graph \(G\)
  on \(n\) vertices as input, runs in \(\OhStar(1.3734^{n})\) time, and
  correctly decides if \(G\) is monopolar. If \(G\) is monopolar, then this
  algorithm also outputs one monopolar partition of the vertex set of \(G\).
\end{reptheorem}
As far as we know, this is the first improvement over the trivial exact
algorithm for this problem. We also obtain a significant improvement---over the
simple exact algorithm---for the more general \probME problem introduced by
Le and Nevries~\cite{LE20141}
. Let \(C',I'\) be two subsets of the vertex set of a graph \(G=(V,E)\). A
monopolar partition \(V=C \uplus{} I\) of \(G\) with \(C'\subseteq{}C\) and
\(I'\subseteq{}I\) is called a \emph{monopolar extension} of the pair $(C',I')$.
We say that graph $G$ is \emph{$(C',I')$-monopolar extendable} if it admits a
monopolar partition that is an extension of $(C',I')$.

\defdecproblem{\probME} {An undirected graph \(G=(V,E)\) on \(n\) vertices, and
  two vertex subsets $C' \subseteq V$ and $I' \subseteq V$.} {If there exists a
  monopolar partition which is a monopolar extension of $(C',I')$, then output
  \yes. Otherwise, output \no.}

\probME asks if a given graph, with some subset of
vertices---\(C'\)---preassigned to the cluster side and some
others---\(I'\)---to the independent set side, has a monopolar partition that
respects these assignments. In other words, it checks if this initial assignment
can be extended to partition the entire graph into a cluster graph and an
independent set while respecting the given constraints. Observe that a graph
\(G\) is monopolar if and only if it is $(\emptyset,\emptyset)$-monopolar
extendable. \probME is thus a generalization of \probMR, and our algorithm for
\probMR in \autoref{thm:exactAlgo} is in fact based on an algorithm that solves
the more general \probME in time \(\OhStar(1.3734^{n})\); see
\autoref{thm:exactAlgoLMP} below.

Churchley and Huang~\cite{churchley2012list} introduced the \probLMP problem as
another generalization of \probMR. Given a graph \(G=(V,E)\) and lists
\(\{\emptyset\neq{}L(v)\subseteq\{\hat{C},\hat{I}\}\;;\;v\in{}V\}\), a
\emph{monopolar partition of $G$ that respects the list function $L$} is a
mapping \(f:V\to{}\{\hat{C},\hat{I}\}\) such that (i) $f(v) \in{} L(v)$ holds
for all $v \in V$, (ii) $f^{-1}(\hat{I})$ induces an independent set in \(G\),
and (iii) $f^{-1}(\hat{C})$ induces a cluster graph in \(G\). Informally put,
the list function \(L\) specifies that some vertices belong to an independent
set and some others belong to a cluster graph, and a monopolar partition that
respects the list function \(L\), is one in which the first set of vertices
indeed belongs to the independent set part, and the second set belongs to the
cluster part. The \probLMP problem seeks to determine whether there exists a
monopolar partition of the graph that respects these initial assignments.

\defdecproblem{\probLMP} {An undirected graph \(G=(V,E)\) on \(n\) vertices and
  a list function $L$ defined on the vertices of \(G\) such that
  \(\emptyset\neq{}L(v)\subseteq \{\hat{C},\hat{I}\}\) for all $v \in V$.} {If
  there exists a monopolar partition of \(G\) that respects the list function
  $L$, then output \yes. Otherwise, output \no.}

Clearly, \probLMP generalizes \probMR: When the lists are of the form
\(L(v)=\{\hat{C},\hat{I}\}\) for all \(v\in{}V\), solving \probLMP is equivalent
to solving \probMR. The problems \probME and \probLMP are computationally
equivalent modulo polynomial time (see \autoref{lem:lmp2me}), and our
exponential speed-up in solving \probME directly transfers to such a speed-up
for \probLMP as well:

\begin{reptheorem}{repthm:exactAlgoLMP}\label{thm:exactAlgoLMP}
  Let \(G\) be a graph on \(n\) vertices.
  \begin{enumerate}
  \item There is an algorithm that takes an instance \((G,C',I')\) of
    {\normalfont\probME} as input, runs in \(\OhStar(1.3734^{n})\) time, and
    correctly decides if \(G\) admits a monopolar partition that is an extension
    of $(C',I')$. If such a monopolar extension exists then this algorithm also
    outputs one such monopolar partition of \(G\).
  \item There is an algorithm that takes an instance \((G,L)\) of
    {\normalfont{\probLMP}} as input, runs in \(\OhStar(1.3734^{n})\) time, and
    correctly decides if \(G\) admits a monopolar partition that respects the
    list function \(L\). If \(G\) does admit such a monopolar partition, then
    this algorithm also outputs one such monopolar partition of \(G\).
  \end{enumerate}
\end{reptheorem}
\smallskip

Assuming the Exponential Time Hypothesis, there is no algorithm that can solve
\probMR---and hence, \probME or \probLMP---in \(2^{o(n)}\)
time~\cite[Prop.~9.1]{kanj2018parameterized}. Any improvement over the running
times of \autoref{thm:exactAlgo} or \autoref{thm:exactAlgoLMP} can therefore
only be in the form of a smaller constant base for the exponential term.

\paragraph*{\classFPT algorithms.} We present two \classFPT algorithms for
\probMR parameterized by measures of \emph{distance from triviality}. Churchley
and Huang showed that \probMR can be solved in polynomial time on claw-free
graphs~\cite{churchley2014polarity}. We show that \probMR can be solved in
\classFPT time for the parameter being two natural \emph{deletion distances} to
claw-free graphs.
\begin{reptheorem}{repthm:MRFPTClawFreeVD}\label{thm:MRFPTClawFreeVD}
    There is an algorithm that takes a graph \(G\) on \(n\) vertices as input,
    runs in \(\OhStar(3.076^{k_{v}})\) time, and correctly decides if \(G\) is
    monopolar. If \(G\) is monopolar, then this algorithm also outputs one
    monopolar partition of the vertex set of \(G\). Here \(k_{v}\) is the
    smallest number of vertices that need to be deleted from \(G\) to obtain a
    claw-free graph.
\end{reptheorem}
\begin{reptheorem}{repthm:MRFPTClawFreeED}\label{thm:MRFPTClawFreeED}
    There is an algorithm that takes a graph \(G\) on \(n\) vertices as input,
    runs in \(\OhStar(2.253^{k_{e}})\) time, and correctly decides if \(G\) is
    monopolar. If \(G\) is monopolar, then this algorithm also outputs one
    monopolar partition of the vertex set of \(G\). Here \(k_{e}\) is the
    smallest number of edges that need to be deleted from \(G\) to obtain a
    claw-free graph.
\end{reptheorem}
These two algorithms do \emph{not} require either the numbers \(k_{v},k_{e}\),
or the corresponding vertex or edge sets, to be given as part of the input.
Previous work on the parameterized complexity of \probMR has focused on a
different notion of distance from triviality, namely, the \emph{number of
  cliques} on the ``cluster side'' of the (unknown) monopolar partition. The
problem has an \classFPT algorithm and a polynomial kernel for this parameter;
we summarize these results later, in the subsection on related work. As far as
we know, these are the only two known results on the parameterized complexity of
\probMR. \autoref{thm:MRFPTClawFreeVD} and \autoref{thm:MRFPTClawFreeED} are
thus a significant addition to the set of known \classFPT results for the
problem.

\paragraph*{Our methods.} As we noted above, an \(\OhStar(2^{n})\)-time
algorithm for \probMR follows more or less directly from the definition of the
problem. We can easily improve this running time to
\(\OhStar(3^{\frac{n}{3}})\approx\OhStar(1.4423^{n})\) as follows. Observe that
if a graph \(G=(V,E)\) has a monopolar partition, then it has such a partition
\(V=C\uplus{}I\) where \(I\) is an \emph{inclusion-maximal} independent set of
\(G\). Indeed, let \(V=C\uplus{}I\) be an arbitrary monopolar partition of
\(G\). If there is no vertex \(x\in{}C\) such that \((I\cup\{x\})\) is
independent in \(G\), then \(I\) is already a maximal independent set of \(G\).
Otherwise, find such a vertex \(x\) and set
\(C\gets{}C\setminus\{x\},I\gets{}I\cup\{x\}\). The new pair \((C,I)\) also
forms a monopolar partition of \(G\). Repeating this process yields a monopolar
partition with the desired property. Thus it is enough to look for a monopolar
partition \(V=C\uplus{}I\) where \(I\) is a \emph{maximal} independent set in
\(G\). Since we can enumerate \emph{all} maximal independent sets of \(G\) in
\(\OhStar(3^{\frac{n}{3}})\) time~\cite{moon1965cliques,johnson1988generating},
we can look for such a monopolar partition in \(\OhStar(3^{\frac{n}{3}})\) time.

As far as we can see, \(\OhStar(3^{\frac{n}{3}})\) seems to be the bound on
``easy'' improvements to the running time of exact exponential-time algorithms
for \probMR. To reduce this further to the \(\OhStar(1.3734^{n})\) bound of
\autoref{thm:exactAlgo}, we exploit structural properties of monopolar graphs,
building in particular upon the work of Le and Nevries who showed that \probME
can be solved in polynomial time on \emph{chair-free}
graphs~\cite[Corollary~3]{LE20141}. We significantly strengthen this result of
Le and Nevries; we show that we can solve \probME in polynomial time in
\emph{general graphs} \(G\), if the specified subset \(C'\) is a \emph{vertex
  modulator to chair-free graphs}\footnote{See \autoref{sec:prelims} for
  definitions.}:


 \begin{reptheorem}{repthm:MEClawFreeDS}\label{thm:MEClawFreeDS}
     There is a polynomial-time algorithm that solves {\normalfont\probME} for
     instances \((G,(C',I'))\) where \(C'\) is a vertex modulator of \(G\) to
     chair-free graphs. If graph \(G\) has a monopolar partition that is an
     extension of \((C',I')\), then this algorithm also outputs one such
     partition.
 \end{reptheorem}

 But does such a ``good'' vertex modulator \(C'\) necessarily exist in
 \emph{every} monopolar graph \(G\)? It turns out that it does, and that in fact
 we may assume \(C'\) to be an \emph{inclusion-minimal} vertex modulator to
 chair-free graphs:


 \begin{replemma}{replem:GoodChairFreeDS}\label{lem:GoodChairFreeDS}
     Let \(G=(V,E)\) be a monopolar graph. There exists a subset
     \(C'\subseteq{}V\) of its vertices such that (i) \(C'\) is a vertex
     modulator of \(G\) to chair-free graphs; (ii) no proper subset of \(C'\) is
     a vertex modulator of \(G\) to chair-free graphs, and (iii) graph \(G\) has
     a monopolar partition \(V=(C\uplus{}I)\) where \(C'\subseteq{}C\).
 \end{replemma}

 It is \emph{not} true that \emph{every} vertex modulator to chair-free graphs
 can be made to ``live inside'' the cluster part of some monopolar partition of
 \(G\). Indeed, consider a \emph{chair-free} monopolar graph \(\hat{G}\). It is
 not the case that arbitrary subsets of its vertices---all of which are,
 trivially, modulators to chair-free graphs---belong to the cluster part in some
 monopolar partition of \(\hat{G}\). What \autoref{lem:GoodChairFreeDS} says is
 that there is \emph{at least one such}---inclusion-minimal---modulator in every
 monopolar graph.

 \probMR thus reduces to the problem of finding a minimal vertex modulator to
 chair-free graphs that belongs to the ``cluster part'' of \emph{some} (unknown)
 monopolar partition of the input graph, or ruling out that such a modulator
 exists. One way to do this would be to enumerate all minimal modulators of
 \(G\) to chair-free graphs and check if any of them can be extended to a
 monopolar partition of the form stated in \autoref{lem:GoodChairFreeDS}, using
 \autoref{thm:MEClawFreeDS} while setting \(I'=\emptyset\). A straightforward
 way to enumerate all such minimal vertex modulators is to (i) locate a chair,
 (ii) guess the partition of its vertices into those that belong to the minimal
 modulator and those which don't, and (iii) recurse on the rest of the graph.
 This is a \(31\)-way branching---every non-empty subset of the set of five
 vertices that form the chair, is a candidate for inclusion in the minimal
 modulator---where the number of undecided vertices decreases by \(5\) in each
 branch. The running time is thus
 \(\OhStar(31^{\frac{n}{5}})\approx{}1.987^{n}\), which is far worse than that
 of the simple algorithm for \probMR that we outlined above.

 Instead of looking for a ``good'' minimal modulator to chair-free graphs, we
 repeatedly find a ``fresh'' chair, with all undecided vertices. We then
 carefully branch on its vertices, assigning each vertex either to the
 ``independent set part'' or the ``cluster part'' of a putative monopolar
 partition. We stop the branching when \emph{no} induced chair has \emph{all}
 its vertices undecided; we then apply \autoref{thm:MEClawFreeDS} to solve the
 remaining instance. The careful branching and early stopping lead to the
 considerable speed-up from the trivial \(\OhStar(2^{n})\) to the
 \(\OhStar(1.3734^{n})\) of \autoref{thm:exactAlgo} and
 \autoref{thm:exactAlgoLMP}. We get our fast \classFPT algorithms by ``branching
 towards'' \autoref{thm:MEClawFreeDS}, as well.

\paragraph*{Related work.}
Though the notion of monopolar graphs arose from purely theoretical
considerations , they have been found to have some practical application as
well. The \ProblemFormat{Monopolar Editing} problem---adding or deleting the
smallest number of edges from/to a given graph to make it monopolar---has
applications in solving problems on protein interaction
networks~\cite{bruckner2015graph}. The \probMR problem, which is the main focus
of our work, has received a considerable amount of attention starting around the
year \(2008\). 

 \paragraph*{Polynomial-time algorithms and complexity.} Ekim et
 al.~\cite{ekim2008polar} showed in \(2008\) that \probMR can be solved in
 \(\Oh(n)\) time on \emph{cographs} with \(n\) vertices\footnote{They used a
   different, more general definition of monopolarity; but one of their
   results---see their Lemma~11 and the preceding definitions---applies to our
   definition of monopolar graphs.}. Ekim and a different set of
 authors~\cite{ekim2008polarity} showed in \(2008\) that both \probMR and \probME
 can be solved in \(\Oh(n+m)\) time on \emph{chordal graphs} with \(n\) vertices
 and \(m\) edges\footnote{The definition of monopolarity became the one that we
   use, from this point on.}. In \(2010\) Ekim and
 Huang~\cite{ekim2010recognizing} described an \(\Oh(n)\)-time algorithm that
 determines if the \emph{line graph of a given bipartite graph} \(G\) on \(n\)
 vertices is monopolar. Churchley and Huang~\cite{churchley2011line} generalized
 this to \emph{all line graphs} in \(2011\): They designed an \(\Oh(n)\)-time
 algorithm that decides if the line graph of a given---arbitrary---graph \(G\) on
 \(n\) vertices is monopolar. Together with the well-known \(\Oh(n+m)\)-time
 algorithm for identifying the root graph of a line graph on \(n\) vertices and
 \(m\) edges~\cite{lehot1974optimal}, this gives an \(\Oh(n+m)\)-time algorithm
 for solving \probMR in line graphs.

 In \(2011\) Le and Nevries~\cite{le2011recognizing,LE20141} derived a number of
 results for \probMR on \emph{planar graphs}, including the first
 \classNP-hardness results for the problem in restricted graph classes. They
 showed that \probMR is \NPHard (i) on triangle-free planar graphs of maximum
 degree \(3\) and, for each fixed integer \(k\geq{}4\), (ii) on planar graphs of
 maximum degree \(3\) that exclude \(\{C_{4},\dotsc,C_{k}\}\) as induced
 subgraphs. On the positive side, they showed that \probMR, and more generally,
 \probME, are polynomial-time solvable on \emph{\(P_{5}\)-free graphs}, on a
 certain superclass of \emph{chair-free graphs}, and on a certain superclass of
 \emph{hole-free graphs}. They showed also that \probMR can be solved in
 polynomial time on \emph{maximal planar graphs}. For designing their algorithms
 they introduced a reduction from \probME---in a certain class of graphs---to
 \probTwoSAT, which we also use in designing our algorithms.

 Churchley and
 Huang~\cite{churchley2012list,churchley2014polarity,churchley2014solving} proved
 a number of results on monopolarity starting \(2012\). They showed that \probMR
 is \NPHard in triangle-free graphs, and derived two distinct polynomial-time
 algorithms for \probMR on claw-free graphs. In fact, one of these algorithms
 solves the more general \probLMP---which is equivalent to \probME, as we noted
 above---on claw-free graphs in polynomial time. They showed also that \probMR
 can be solved in polynomial time in a certain class of graphs that contains all
 of \emph{claw-free graphs, chordal graphs, cographs, permutation graphs, and
   co-comparability graphs}. Independently of this work, Ekim et
 al.~\cite{ekim2013polar} showed in \(2013\) that \probMR can be solved in
 polynomial time on permutation graphs.

 \paragraph*{Parameterized algorithms.} We are aware of only two papers that
 take up the parameterized complexity of \probMR. These are both by the same set
 of authors, namely, Kanj, Komusiewicz, Sorge and
 van~Leeuwen~\cite{kanj2018parameterized,kanj2020solving}. In the first paper
 from \(2018\), they show---\emph{inter alia}---that \probMR is \classFPT
 parameterized by the \emph{number of cliques} on the ``cluster side'' \(G[C]\)
 of a(n unknown) monopolar partition \(V(G)=C\uplus{}I\) of the input graph
 \(G\). They derive an algorithm that, given a graph \(G\) and a positive
 integer \(k\), decides in \(\OhStar(2^{k})\) time whether \(G\) has a monopolar
 partition \(V(G)=C\uplus{}I\) where the ``cluster side'' \(G[C]\) is a disjoint
 union of at most \(k\) cliques. They show also that under ETH, any algorithm
 that solves \probMR must take \(2^{\Omega(n+m)}\) time on graphs with \(n\)
 vertices and \(m\) edges. In the second work, from \(2020\), they show that
 \probMR has a kernel with \(\Oh(k^{4})\) vertices, for the same parameter
 \(k\). Note that this parameter \(k\) is not comparable with either of our
 parameters \(k_{v},k_{e}\); there are easily seen to be graphs in which each of
 \(k_{v}\) or \(k_{e}\) is a small constant while \(k\) is unbounded, and
 \emph{vice versa}.

 \paragraph*{Organization of the rest of the paper.}
 In the next section we list our notation and terminology, and prove some
 preliminary results. We describe our exact exponential algorithms for
 monopolarity and prove \autoref{lem:GoodChairFreeDS}, \autoref{thm:exactAlgo},
 \autoref{thm:exactAlgoLMP} and \autoref{thm:MEClawFreeDS} in
 \autoref{sec:exact_chair-free}. We derive our \classFPT algorithms in
 \autoref{sec:fpt}, where we prove \autoref{thm:MRFPTClawFreeVD} and
 \autoref{thm:MRFPTClawFreeED}. We summarize our results and list some open
 problems in \autoref{sec:conclusion}.


\captionsetup[subfigure]{labelformat=simple}
\section{Preliminaries}\label{sec:prelims}
All our graphs are finite, simple, and undirected. We follow the graph notation
and terminology of Diestel~\cite{diestel2017graph}. Let \(G=(V,E)\) be a graph.
For a vertex subset \(X\) of \(G\) we use (i) \(G[X]\) for the subgraph of \(G\)
\emph{induced} by the set \(X\), (ii) \(G-X\) for the subgraph of \(G\) obtained
by deleting all the vertices---and their incident edges---in set \(X\), (iii)
\(N(X)\) for the set of all vertices \(v\notin{}X\) that are adjacent to some
vertex of \(X\) (the \emph{open neighbourhood} of \(X\)), and (iv) \(N[X]\) for
the \emph{closed neighbourhood} \((X\cup{}N(X))\) of \(X\). For an \emph{edge}
subset \(Y\) of \(G\) we use \(G-Y\) to denote the subgraph of \(G\) obtained by
deleting all the edges---and no vertices---in set \(Y\).

The \emph{degree} of a vertex \(v\) is \(deg(v)=|N(v)|\). Graph \(G\) is
\emph{\(H\)-free} for a graph \(H\) if there is no vertex subset \(X\) of \(G\)
such that \(G[X]\) is isomorphic to \(H\); that is, if \(H\) does not appear as
an \emph{induced subgraph} of \(G\). A graph \(G\) is a \emph{cluster graph} if
each connected component of \(G\) is a complete graph. It is well-known that
graph \(G\) is a cluster graph if and only if \(G\) is \(P_{3}\)-free (see
\autoref{subfig:p3}).

A partition \(V = (C \uplus{} I)\) of the vertex set of graph \(G\) is a
\emph{monopolar partition} of \(G\) if \(G[C]\) is a cluster graph and the set
\(I\) is an independent set in \(G\). Graph \(G\) is \emph{monopolar} if it has
a monopolar partition. For two (disjoint) subsets \(C',I'\) of the vertex set
\(V\) of graph \(G\), a monopolar partition \(V=C\uplus{}I\) of \(G\) with
\(C'\subseteq{}C\) and \(I'\subseteq{}I\) is a \emph{monopolar extension} of the
pair $(C',I')$. Graph $G$ is \emph{$(C',I')$-monopolar extendable} if it admits
a monopolar partition that is an extension of $(C',I')$.

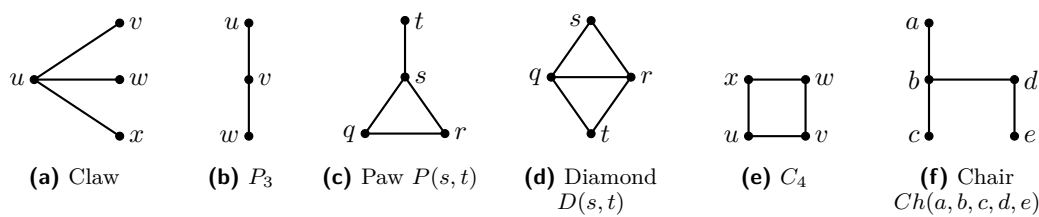
\begin{figure}
  \centering
      \subfloat[\centering Claw\label{subfig:claw}]{
        \centering
        \begin{tikzpicture}[scale=0.75]
          \filldraw (0,0) circle (2pt) node[left] {$u$};
          \filldraw (1.5,0) circle (2pt) node[right] {$w$};
          \filldraw (1.5,1) circle (2pt) node[right] {$v$};
          \filldraw (1.5,-1) circle (2pt) node[right] {$x$};
          \draw[thick] (0,0) -- (1.5,0);
          \draw[thick] (0,0) -- (1.5,1);
          \draw[thick] (0,0) -- (1.5,-1);
        \end{tikzpicture}
      }
      \hfill
      \subfloat[\centering $P_3$\label{subfig:p3}]{
        \centering
        \begin{tikzpicture}[scale=0.75]
          \filldraw (0,1) circle (2pt) node[left] {$u$};
          \filldraw (0,-1) circle (2pt) node[left] {$w$};
          \filldraw (0,0) circle (2pt) node[right] {$v$};

          \draw[thick] (0,0) -- (0,1);
          \draw[thick] (0,0) -- (0,-1);
        \end{tikzpicture}
      }
      \hfill
      \subfloat[\centering Paw \(P(s,t)\)\label{subfig:paw}]{
          \centering
          \begin{tikzpicture}[scale=0.75]
              \filldraw (0,0) circle (2pt) node[right] {$s$};
              \filldraw (-0.7,-1) circle (2pt) node[left] {$q$};
              \filldraw (0.7,-1) circle (2pt) node[right] {$r$};
              \filldraw (0,1) circle (2pt) node[right] {$t$};

              \draw[thick] (0,0) -- (0,1);
              \draw[thick] (0,0) -- (-0.7,-1);
              \draw[thick] (0,0) -- (0.7,-1);
              \draw[thick] (-0.7,-1) -- (0.7,-1);
          \end{tikzpicture}
      }
      \hfill
      \subfloat[\centering Diamond \(D(s,t)\)\label{subfig:diamond}]
      {\centering
        \begin{tikzpicture}[scale=0.75]
          \filldraw (0,1) circle (2pt) node[left] {$s$};
          \filldraw (0,-1) circle (2pt) node[right] {$t$};
          \filldraw (-0.7,0) circle (2pt) node[left] {$q$};
          \filldraw (0.7,0) circle (2pt) node[right] {$r$};

          \draw[thick] (0,1) -- (0.7,0);
          \draw[thick] (0,1) -- (-0.7,0);
          \draw[thick] (0,-1) -- (0.7,0);
          \draw[thick] (0,-1) -- (-0.7,0);
          \draw[thick] (0.7,0) -- (-0.7,0);
      \end{tikzpicture}
      }
      \hfill
      \subfloat[\centering $C_4$\label{subfig:c4}]{
        \centering
        \begin{tikzpicture}[scale=0.75]
          \filldraw (0,0) circle (2pt) node[left] {$u$};
          \filldraw (0,1) circle (2pt) node[left] {$x$};
          \filldraw (1,1) circle (2pt) node[right] {$w$};
          \filldraw (1,0) circle (2pt) node[right] {$v$};

          \draw[thick] (0,0) -- (0,1);
          \draw[thick] (0,0) -- (1,0);
          \draw[thick] (1,1) -- (1,0);
          \draw[thick] (1,1) -- (0,1);
      \end{tikzpicture}
      }
      \hfill
      \subfloat[\centering Chair \chair{a,b,c,d,e}\label{subfig:chair}]{
        \centering
        \begin{tikzpicture}[scale=0.75]
          \filldraw (0,1) circle (2pt) node[left] {$a$};
          \filldraw (0,0) circle (2pt) node[left] {$b$};
          \filldraw (0,-1) circle (2pt) node[left] {$c$};
          \filldraw (1.5,0) circle (2pt) node[right] {$d$};
          \filldraw (1.5,-1) circle (2pt) node[right] {$e$};
          \draw[thick] (0,0) -- (0,1);
          \draw[thick] (0,0) -- (0,-1);
          \draw[thick] (0,0) -- (1.5,0);
          \draw[thick] (1.5,0) -- (1.5,-1);
        \end{tikzpicture}
      }
      \caption{Some small named graphs.}\label{fig:smallgraphs}
\end{figure}

Graph $G$ is \emph{claw-free} if it does not have a
\emph{claw}---\autoref{subfig:claw}---as an induced subgraph. A subset \(X\) of
the vertex set of graph \(G\) is said to be a \emph{vertex modulator of \(G\) to
  claw-free graphs} if the graph \(G-X\) is claw-free. A subset \(Y\) of the
edge set of graph \(G\) is said to be an \emph{edge modulator of \(G\) to
  claw-free graphs} if the graph \(G-Y\) is claw-free. Graph $G$ is
\emph{chair-free} if it does not have a
\emph{chair}---\autoref{subfig:chair}---as an induced subgraph. A subset \(X\)
of the vertex set of graph \(G\) is said to be a \emph{vertex modulator of \(G\)
  to chair-free graphs} if the graph \(G-X\) is chair-free. \emph{We use the
  short forms ``vertex modulator'' and ``edge modulator'', when the fuller forms
  are clear from the context.}

We use \(P_{k}\) to denote a path with \(k\) vertices, and \(C_{k}\) for a cycle
with \(k\) vertices. Some other small graphs that we need to refer by name are
listed in \autoref{fig:smallgraphs}. We use \chair{a,b,c,d,e} to denote a \emph{chair}
with \(\deg(b)=3,\deg(d)=2,\deg(a)=\deg(c)=\deg(e)=1\) as in
\autoref{subfig:chair}, \(P(s, t)\) to denote a \emph{paw} with
\(\deg(s)=3,\deg(t)=1\), and $D(s, t)$ for a \emph{diamond} with \(\deg(s)=\deg(t)=2\).

Now, we show the following basic facts about monopolar extension:
\begin{lemma}\label{lem:deg1}
  Let \(C',I'\) be two (disjoint) subsets of the vertex set
  \(V\) of graph \(G\), and let \(x\notin(C'\cup{}I')\) be a vertex of degree
  \(1\) in \(G\). 
  Then G is $(C',I')$-monopolar extendable if and only if $G-\{x\}$ is
  $(C',I')$-monopolar extendable.
\end{lemma}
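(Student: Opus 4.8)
The plan is to prove the two directions separately, and to note up front that the degree-one hypothesis is needed only for the harder (``if'') direction. For the forward (``only if'') direction I would take any monopolar extension \(V = C \uplus I\) of \((C',I')\) and simply restrict it to \(G-\{x\}\). Because \(x \notin C' \cup I'\), deleting \(x\) from whichever of \(C, I\) contains it preserves the containments \(C' \subseteq C \setminus \{x\}\) and \(I' \subseteq I \setminus \{x\}\); moreover an induced subgraph of a cluster graph is again a cluster graph, and a subset of an independent set is independent. Hence \((C \setminus \{x\}) \uplus (I \setminus \{x\})\) witnesses that \(G-\{x\}\) is \((C',I')\)-monopolar extendable. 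Note that this half of the argument does not use \(\deg(x)=1\) at all.

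For the backward (``if'') direction I would start from a monopolar extension \((V \setminus \{x\}) = C_0 \uplus I_0\) of \((C',I')\) in \(G-\{x\}\) and extend it by deciding the side of \(x\). Let \(y\) denote the unique neighbour of \(x\) in \(G\); this is precisely where \(\deg(x)=1\) enters. Since \(\{C_0, I_0\}\) partitions \(V \setminus \{x\}\) and \(y \neq x\), exactly one of \(y \in C_0\) or \(y \in I_0\) holds, and the key idea is to place \(x\) on the side \emph{opposite} to \(y\). If \(y \in C_0\), I set \(C = C_0\) and \(I = I_0 \cup \{x\}\); as \(y\) is the only neighbour of \(x\) and \(y \notin I_0\), the enlarged set \(I_0 \cup \{x\}\) is still independent. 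If instead \(y \in I_0\), I set \(C = C_0 \cup \{x\}\) and \(I = I_0\); now \(x\) has no neighbour inside \(C_0\), so \(x\) forms an isolated \(K_1\) component in \(G[C_0 \cup \{x\}]\), and adjoining such a component to a cluster graph keeps it a cluster graph. In both cases \(x \notin C' \cup I'\) gives \(C' \subseteq C\) and \(I' \subseteq I\), so \(V = C \uplus I\) is the required monopolar extension of \((C',I')\) in \(G\).

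The only subtle point---and the reason the hypothesis \(\deg(x)=1\) is essential---is the cluster-graph check. Placing \(x\) onto the cluster side without care could create a forbidden induced \(P_3\) of the form \(x\)--\(y\)--\(z\) whenever \(x\)'s neighbour \(y\) already has another cluster-neighbour \(z\). The dichotomy above sidesteps this entirely: a degree-one vertex can always be routed to the side that does not contain its single neighbour, so it never participates in a \(P_3\) on the cluster side nor in an edge on the independent side. I expect this case analysis to be the crux of the argument; everything else is routine verification of the cluster-graph and independence conditions.
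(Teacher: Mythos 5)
Your proposal is correct and follows essentially the same argument as the paper: the forward direction is immediate by restriction, and for the backward direction both proofs place the degree-one vertex \(x\) on the side opposite its unique neighbour \(y\) (into \(I\) if \(y\) is on the cluster side, into \(C\) as an isolated \(K_1\) component if \(y\) is on the independent side). Your write-up is merely more detailed in verifying the cluster/independence conditions and the containments \(C'\subseteq C\), \(I'\subseteq I\), which the paper leaves as observations.
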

\begin{proof}
  Note that if $G$ is $(C',I')$-monopolar extendable then so is
  \(G-\{x\}\). 
  Suppose \(G-\{x\}\) is $(C',I')$-monopolar extendable with monopolar partition
  \(V-\{x\}=(C\uplus{}I)\). Let the lone neighbour of $x$ in \(G\) be $y$. If
  \(y\in{}C\) holds then we set \(\tilde{I}=I\cup{x}\); observe that
  \(V=C\uplus\tilde{I}\) is a valid monopolar partition of $G$. On the other
  hand if 
  \(y\in{}I\) holds then we set \(\tilde{C}=C\cup{x}\); observe that
  \(V=\tilde{C}\uplus{}I\) is then a valid monopolar patition of $G$.
\end{proof}

\begin{lemma}\label{lem:aside}
  A graph $G$ is \((C',I')\)-monopolar extendable if and only if $I'$ is an
  independent set in \(G\) and the induced subgraph \(G-I'\) is
  \(((C'\cup{}N(I')),\emptyset)\)-monopolar extendable.
\end{lemma}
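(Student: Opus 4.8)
The plan is to prove both implications of the equivalence by explicitly translating monopolar partitions back and forth between $G$ and $G-I'$, keeping careful track of where the vertices of $I'$ and of $N(I')$ must go. The guiding observation is that whenever $I'$ is forced onto the independent-set side, every neighbour of $I'$ is forced onto the cluster side (otherwise the independent set would contain an edge); this is exactly the condition $N(I')\subseteq\tilde{C}$ that the right-hand side encodes.

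For the forward direction, I would start from a monopolar extension $V=(C\uplus I)$ of $(C',I')$, so that $C'\subseteq C$, $I'\subseteq I$, the graph $G[C]$ is a cluster graph, and $I$ is independent. Since $I'\subseteq I$, independence of $I$ immediately gives that $I'$ is independent. I would then set $\tilde{C}=C$ and $\tilde{I}=I\setminus I'$ and verify that this is a monopolar partition of $G-I'$: the cluster side is unchanged, so $G[\tilde{C}]=G[C]$ is still a cluster graph, and $\tilde{I}\subseteq I$ stays independent. It remains to check $C'\cup N(I')\subseteq\tilde{C}$. The inclusion $C'\subseteq C=\tilde{C}$ is given; for $N(I')$, any $w\in N(I')$ has a neighbour in $I'\subseteq I$, so $w$ cannot lie in the independent set $I$, whence $w\in C=\tilde{C}$.

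For the backward direction, I would take a monopolar partition $(V\setminus I')=(\tilde{C}\uplus\tilde{I})$ of $G-I'$ with $C'\cup N(I')\subseteq\tilde{C}$, and reassemble a partition of $G$ by setting $C=\tilde{C}$ and $I=\tilde{I}\cup I'$. Then $C'\subseteq\tilde{C}=C$ and $I'\subseteq I$, and $G[C]=G[\tilde{C}]$ is a cluster graph. The only nontrivial point is that $I=\tilde{I}\cup I'$ is independent in $G$: edges inside $\tilde{I}$ are excluded because $\tilde{I}$ is independent in the induced subgraph $G-I'$ and hence in $G$, edges inside $I'$ are excluded by hypothesis, and a crossing edge from some $w\in\tilde{I}$ to some $u\in I'$ would force $w\in N(I')\subseteq\tilde{C}$, contradicting $w\in\tilde{I}$ together with $\tilde{C}\cap\tilde{I}=\emptyset$.

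I expect the only genuine subtlety to be this last case analysis on the independence of $\tilde{I}\cup I'$, specifically the crossing edges between $\tilde{I}$ and $I'$. This is precisely the step where the condition $N(I')\subseteq\tilde{C}$ is indispensable, and it is what makes $N(I')$ (rather than merely $C'$) appear in the statement. Everything else reduces to routine checks that passing to an induced subgraph preserves the \emph{cluster graph} and \emph{independent set} properties on the relevant vertex sets.
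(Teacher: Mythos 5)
Your proof is correct and follows essentially the same route as the paper's: the forward direction keeps the cluster side and removes $I'$ from the independent side after observing $N(I')\subseteq C$, and the backward direction re-adds $I'$ to $\tilde{I}$, using $N(I')\subseteq\tilde{C}$ to rule out crossing edges. The only difference is that you spell out the three-case independence check a bit more explicitly than the paper does, which is a matter of exposition, not substance.
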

\begin{proof}
  Suppose \(G=(V,E)\) is \((C',I')\)-monopolar extendable, and let
  \(V=(C\uplus{}I)\) be a monopolar partition of \(G\) where $C'\subseteq{}C$
  and $I'\subseteq{}I$ hold. Then---by definition--- $I'$ is an independent set
  in \(G\) and \(C\cap{}I'=\emptyset\). Since \(I\supseteq{}I'\) is an
  independent set, we get that every vertex in the set \(N(I')\) is in the set
  \((V\setminus{}I)=C\); that is, we get that \(N(I')\subseteq{}C\) holds. It
  follows that \((C'\cup{}N(I'))\subseteq{}C\) holds. And since
  \(\emptyset\subseteq{}(I\setminus{}I')\), we get that
  \(V=C\uplus{(I\setminus{}I')}\) is a monopolar extension of the pair
  \(((C'\cup{}N(I'),\emptyset))\) in the induced subgraph \(G-I'\). Thus the
  induced subgraph \(G-I'\) is \(((C'\cup{}N(I')),\emptyset)\)-monopolar
  extendable.

  Conversely, suppose $I'$ is an independent set in \(G\) such that the induced
  subgraph \(G-I'\) is \(((C'\cup{}N(I')),\emptyset)\)-monopolar extendable for
  some vertex subset \(C'\subseteq(V\setminus{}I')\). Let
  \((\tilde{C},\tilde{I})\) be a \(((C'\cup{}N(I')),\emptyset)\)-monopolar
  extension of \(G-I'\). Then both (i) \(C'\subseteq\tilde{C}\) and (ii)
  \(N(I')\subseteq{}\tilde{C}\) hold, by definition. Thus we get that
  \(N(I')\cap\tilde{I}=\emptyset\) holds, and hence that the set
  \((I'\cup\tilde{I})\) is an independent set in graph \(G\). It follows that
  \(V=(\tilde{C},(I'\cup\tilde{I}))\) is a monopolar partition of \(G\) that
  extends the pair \((C', I')\).
\end{proof}

\begin{lemma}\label{lem:lmp2me}
  {\normalfont{\probME}} and {\normalfont{\probLMP}} are computationally equivalent modulo polynomial time.
\end{lemma}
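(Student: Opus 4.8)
The plan is to establish the equivalence by exhibiting two polynomial-time many-one reductions, one in each direction, each of which preserves the \yes/\no answer exactly (and, as a bonus, translates a witness partition directly). Since both are decision problems, this is precisely what ``computationally equivalent modulo polynomial time'' asks for.

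First I would reduce \probME to \probLMP. Given an instance \((G,C',I')\), I would begin with the boundary check: if \(C'\cap I'\neq\emptyset\), then no vertex can lie simultaneously in the cluster part and the independent part, so I would simply output a fixed trivial \noinstance of \probLMP. Otherwise I would define a list function \(L\) on \(V(G)\) by setting \(L(v)=\{\hat{C}\}\) for \(v\in C'\), \(L(v)=\{\hat{I}\}\) for \(v\in I'\), and \(L(v)=\{\hat{C},\hat{I}\}\) for every remaining vertex. Each list is nonempty, so \((G,L)\) is a legal \probLMP instance, and it is clearly computable in polynomial time. The key observation I would then verify is that a monopolar partition \(V=C\uplus I\) respects \(L\) exactly when every vertex of \(C'\) goes to \(C\) and every vertex of \(I'\) goes to \(I\)---that is, exactly when it is a monopolar extension of \((C',I')\)---so the two instances have the same answer.

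In the reverse direction I would reduce \probLMP to \probME. Given \((G,L)\), I would set \(C'=\{v\mid L(v)=\{\hat{C}\}\}\) and \(I'=\{v\mid L(v)=\{\hat{I}\}\}\), leaving the vertices with \(L(v)=\{\hat{C},\hat{I}\}\) unconstrained. Because each list is a nonempty subset of \(\{\hat{C},\hat{I}\}\), the constructed sets \(C'\) and \(I'\) are automatically disjoint, and \((G,C',I')\) is computable in polynomial time. By the same correspondence as before, a mapping \(f\) respecting \(L\) is precisely the indicator function of a monopolar partition extending \((C',I')\), so the answers again agree.

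The reductions are essentially a direct dictionary translation between the list formulation and the pair-of-sets formulation, and I do not expect any genuine difficulty. The one point that needs care---the only ``obstacle''---is the degenerate case \(C'\cap I'\neq\emptyset\) in the first reduction, where a naive construction would assign a vertex the forbidden empty list; I handle this by detecting the conflict up front and mapping to a trivial \noinstance. Everything else follows immediately from the definitions, with the witness partition transferring unchanged across both reductions.
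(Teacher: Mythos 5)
Your proposal is correct and follows essentially the same approach as the paper: both directions are the direct dictionary translation between lists and the pair $(C',I')$, exactly as in the paper's proof. Your extra handling of the degenerate case $C'\cap I'\neq\emptyset$ is a harmless addition; the paper sidesteps it because its preliminaries define $C'$ and $I'$ to be disjoint subsets.
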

\begin{proof}
  Given an instance of \probLMP with list function $L$, we can construct an
  equivalent instance of \probME by defining $C'$ as the set of all vertices $v$
  such that $L(v) = \{{\hat{C}}\}$ and \(I'\) as the set of all vertices $u$
  such that $L(u)=\{\hat{I}\}$. Conversely, given an instance of \probME with
  sets $C'$ and $I'$, we can construct an equivalent instance of \probLMP by
  setting the list function $L$ to be $L(v)=\{\hat{C}\}$ if $v \in C'$,
  $L(v)=\{\hat{I}\}$ if $v \in I'$, and $L(v)=\{\hat{C},\hat{I}\}$ if $v \notin
  (C' \cup I')$.
\end{proof}

Le and Nevries~\cite{LE20141} prove the following result for claw-free graphs:
\begin{theorem}\label{thm:nevries2sat}
  {\normalfont{\probME}} can be solved in $O(n^4)$ time on $n$-vertex claw-free graphs.
\end{theorem}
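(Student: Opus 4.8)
The plan is to solve \probME on a claw-free graph by reducing it to an instance of \probTwoSAT of polynomial size and then calling a linear-time \probTwoSAT solver. First I would preprocess the instance \((G,(C',I'))\). Using \autoref{lem:aside} I would push \(N(I')\) into the cluster side and delete \(I'\), so that it suffices to solve \probME with \(I'=\emptyset\); since \(G-I'\) is an induced subgraph of \(G\), it remains claw-free. I would also remove degree-\(1\) vertices repeatedly via \autoref{lem:deg1}. Then I introduce one Boolean variable \(x_v\) for every remaining vertex \(v\), with the intended meaning that \(x_v\) is true exactly when \(v\) is placed into the independent part \(I\) (so \(v\) goes to the cluster side \(C\) when \(x_v\) is false).

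The constraints encoding the independent part are immediate and already fit the \probTwoSAT format. For every vertex \(v\) forced to the cluster side (those in \(C'\) together with the \(N(I')\) moved in by \autoref{lem:aside}) I add the unit clause \((\neg x_v)\). For every edge \(uv\) of \(G\) I add \((\neg x_u \vee \neg x_v)\), which forbids both endpoints from landing in \(I\) and hence guarantees that the set of true-valued vertices is independent. These clauses are faithful and have at most two literals.

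The difficulty, and the heart of the proof, is encoding the requirement that the cluster side \(C\) induce a \(P_3\)-free graph. Read literally, each induced path on \(u,v,w\) (with \(uv,vw\in E\) and \(uw\notin E\)) forbids all three vertices from lying in \(C\), i.e.\ demands the \emph{three}-literal clause \((x_u \vee x_v \vee x_w)\), which does not fit \probTwoSAT. This is exactly where claw-freeness must be spent. The key structural fact is that in a claw-free graph \(\alpha(N(v))\le 2\) for every vertex \(v\): the open neighbourhood of any vertex contains no independent set of size three, equivalently the non-edges inside \(N(v)\) induce a triangle-free graph. An immediate consequence is that in any valid partition each vertex has at most two neighbours in \(I\) (three pairwise non-adjacent \(I\)-neighbours would form a claw). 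I would exploit this bounded local structure to replace the three-literal cluster clauses by two-literal clauses: for each vertex I analyse the constant-size induced configurations sitting on its neighbourhood---paws \(P(s,t)\), diamonds \(D(s,t)\), and bare induced copies of \(P_3\)---and, because such neighbourhoods are so restricted, argue that forbidding an induced \(P_3\) inside \(C\) is equivalent to a collection of implications between \emph{pairs} of variables. Enumerating these configurations over all tuples of at most four vertices adds \(\Oh(n^{4})\) two-literal clauses. The claim to be proved is that, thanks to \(\alpha(N(v))\le 2\), these pairwise implications are jointly equivalent to the full family of \(P_3\)-constraints, so that every satisfying assignment yields a cluster side with no induced \(P_3\), and conversely. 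Establishing this equivalence---ruling out any \emph{spurious} satisfying assignment once the local structure of claw-free neighbourhoods is accounted for---is the main obstacle, and is precisely where the entire claw-free hypothesis is consumed.

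Granting the encoding, the remaining steps are routine. The constructed \probTwoSAT instance has \(\Oh(n)\) variables and \(\Oh(n^{4})\) clauses and can be solved in time linear in its size; a satisfying assignment translates back into a monopolar extension of \((C',I')\) by placing exactly the true-valued vertices into \(I\) and undoing the \autoref{lem:aside} and \autoref{lem:deg1} preprocessing to recover a partition of the original \(G\), while unsatisfiability certifies that no extension exists. The running time is dominated by building the clause set, giving \(\Oh(n^{4})\). I expect essentially all of the technical effort to lie in the correctness of the cluster-side encoding; the independent-side clauses, the preprocessing, and the \probTwoSAT call are all direct.
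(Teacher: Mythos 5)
There is a genuine gap, and it sits exactly where you placed your single deferred claim. Your plan is to capture the cluster-side ($P_3$-freeness) constraint entirely by extra two-literal clauses derived from constant-size configurations, asserting that claw-freeness makes such a family of binary clauses equivalent to the ternary constraints $(x_u \vee x_v \vee x_w)$. That equivalence is false. The machinery behind this theorem (which the paper cites from Le and Nevries rather than reproving, but reconstructs in \autoref{sec:exact_chair-free}) only yields a faithful \probTwoSAT encoding for graphs that are \emph{$C'$-good} (\autoref{def:Cgood}, \autoref{lem:2sat}): every induced $P_3$ must touch $N[C']$, a triangle, or an induced $C_4$, because it is precisely the paw-, diamond-, $C_4$- and $P_3$-clauses attached to those structures that stand in for the ternary clause. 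Claw-freeness does not make a graph $C'$-good: the $5$-cycle $C_5$ with $C'=\emptyset$ is claw-free, has minimum degree $2$ (so your degree-one preprocessing does nothing), and every one of its induced $P_3$s is $C'$-bad. On $C_5$ your formula consists of edge clauses only, so the all-false assignment satisfies it while placing all five vertices on the cluster side, which is not a cluster graph. Moreover, no family of two-literal clauses over the vertex variables can repair this: the assignments corresponding to valid monopolar partitions of $C_5$ are exactly the characteristic vectors of its five independent sets of size two, and this set is not closed under coordinatewise majority (the majority of $\{1,3\}$, $\{2,4\}$, $\{3,5\}$ is $\{3\}$, which is not valid), whereas every \probTwoSAT solution set is majority-closed. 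So the ``main obstacle'' you defer is not merely unproved; as formulated, it cannot be proved.

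What the actual argument does instead is handle $C'$-bad $P_3$s by a \emph{graph reduction}, not by clauses. This is the content of \autoref{thm:chairhitting2sat} (whose specialization to $C'=\emptyset$ on claw-free graphs, which are in particular chair-free, gives the theorem): after the degree-one preprocessing, a $C'$-bad $P_3$ $uvw$ has a rigid structure---$v$ has degree exactly $2$, the endpoints $u,w$ have degree exactly $2$, and their outside neighbours are single, distinct vertices avoiding $C'$ (Claims~\ref{claim:vdegree2}--\ref{claim:deg2each}, each proved by exhibiting a forbidden chair otherwise)---and then deleting the middle vertex $v$ yields an equivalent instance (\autoref{claim:reductionRule}, with an explicit case analysis showing how to reinsert $v$ into a partition of the smaller graph). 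One iterates this deletion until the graph becomes $C'$-good, and only then builds and solves the \probTwoSAT formula; the $\Oh(n^{4})$ bound comes from the goodness test and the at most $n$ deletion rounds. Your write-up is missing this reduction loop entirely, and it is the step where claw-/chair-freeness is genuinely consumed.
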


As we noted in \autoref{sec:introduction}, the algorithms of
\autoref{thm:MRFPTClawFreeVD} and \autoref{thm:MRFPTClawFreeED} do not require
small-sized (vertex, or edge) modulators to be given as part of the input. But
they do require access to such sets, and so they compute the required modulators
as an intermediate step. That is, our algorithms that prove
\autoref{thm:MRFPTClawFreeVD} and \autoref{thm:FPTClawFreeED} employ algorithms
to solve the following parameterized deletion problems along the way:

\defparproblem{\probCFVD} {An undirected graph \(G=(V,E)\), and an integer
  \(k\).} {\(k\)} {If there is a set \(X\subseteq V\;;\;|X|\leq{}k\) such that
  deleting \(G - X\) is a claw-free graph, then output one such set \(X\).
  Otherwise, output \no. }

\defparproblem{\probCFED} {An undirected graph \(G=(V,E)\), and an integer
  \(k\).} {\(k\)} {If there is a set \(X\subseteq E\;;\;|X|\leq{}k\) such that
  deleting \(X\) from \(G\) results in a claw-free graph, then output one such
  set \(X\). Otherwise, output \no. }

It turns out that in each case, the main bottleneck for solving \probMR is in
fact the time required for computing one such modulator; a faster algorithm for
\probCFVD would directly imply a faster running time for
\autoref{thm:MRFPTClawFreeVD}, and similarly for \probCFED and
\autoref{thm:MRFPTClawFreeED}.

So we wanted to find fast \classFPT algorithms for \probCFVD and \probCFED. As
far as we could find, the fastest way to solve \probCFVD is by phrasing it as an instance of a
\ProblemName{$4$-Hitting Set} problem and then applying the best known \classFPT
algorithm for \ProblemName{$4$-Hitting Set} due to Dom et
al.~\cite[Theorem~3.1]{dom2010fixed}. This yields an algorithm that solves
\probCFVD in \(\OhStar(3.076^{k})\) time, and this running time appears in
\autoref{thm:MRFPTClawFreeVD}. A similar approach works for \probCFED; we
express it as an instance of \ProblemName{$3$-Hitting Set} and then apply the
argument underlying Dom et al.'s algorithm for this
problem~\cite[Proposition~3.1]{dom2010fixed} which uses an algorithm for \ProblemName{Vertex Cover} as black box. If we use the current best running time of \(\OhStar(1.253^{k})\) for \ProblemName{Vertex Cover}~\cite{harris_et_al:LIPIcs.STACS.2024.40} in this argument we get an algorithm that solves \probCFED in \(\OhStar(2.253^{k})\) time, and this running time appears in \autoref{thm:MRFPTClawFreeED}.

These two fast algorithms for modulators to claw-free graphs are thus direct
applications of existing results to these problems. We capture these as a
theorem for ease of reference and discovery.


\begin{reptheorem}{repthm:FPTClawFreeED}
  \label{thm:FPTClawFreeED}
  \normalfont{\probCFVD} can be solved in \(\OhStar(3.076^{k})\) time, and
  \normalfont{\probCFED} can be solved in \(\OhStar(2.253^{k})\) time.
\end{reptheorem}

\section{Exact Algorithms for Monopolarity}
\label{sec:exact_chair-free}

In this section we prove
\hyperlink{proof:exactAlgo}{Theorem~\ref{thm:exactAlgo}} and
\hyperlink{proof:exactAlgoLMP}{Theorem~\ref{thm:exactAlgoLMP}}.
We get these faster running times 
by exploiting various structural properties of monopolar graphs to lift a
theorem of Le and Nevries~\cite{LE20141} for chair-free graphs, so that it works
for all graphs; see
\hyperlink{proof:MEClawFreeDS}{Theorem~\ref{thm:MEClawFreeDS}}.
Le and Nevries~\cite{LE20141} showed that \probME, and consequently
\probMR, can be solved in polynomial time on chair-free graphs. 
They achieved this result by reducing \probME on (a superclass of) chair-free
graphs to \probTwoSAT; our algorithms make extensive use of this reduction.

The small graphs on four vertices depicted in \autoref{subfig:paw},
\autoref{subfig:diamond}, and \autoref{subfig:c4} on page~\pageref{subfig:c4},
are frequently considered in algorithms for monopolar recognition across various
graph classes~\cite{churchley2011line,LE20141}. This is because various useful
observations can be made about the vertices of these small graphs in any valid
monopolar partition of a graph \(G\), when the small graphs appear as induced
subgraphs of \(G\). Consider a graph \(G=(V,E)\) which admits a monopolar
partition \(V=C\uplus{}I\). We can easily infer the following properties
(Refer~\autoref{fig:smallgraphs} for the notation):

\begin{enumerate}\label{list:smallGraphProperties}
\item For any edge \(uv\in{}E\), either \(u\in{}C\) or \(v\in{}C\) must hold; if
  both \(u\) and \(v\) belong to \(I\) then it would violate the requirement
  that \(I\) must be an independent set.
\item For any induced paw \(P(s,t)\) in \(G\), at least one of the two vertices
  \(\{s,t\}\) must belong to the set \(I\). Indeed, suppose
  \(\{s,t\}\subseteq{}C\) holds. Then vertex \(q\) cannot be in set \(C\) since
  the edge \(qt\) is not present in \(G\). Similarly, vertex \(r\) cannot be in
  set \(C\) because of the missing edge \(rt\). Thus both the vertices \(q,r\)
  must be in set \(I\), but this violates the requirement that \(I\) must be an
  independent set.
\item For any induced diamond \(D(s,t)\) in \(G\), at least one of the two
  vertices \(\{s,t\}\) must belong to the set \(I\). Indeed, suppose
  \(\{s,t\}\subseteq{}C\) holds. Then neither of the vertices \(\{q,r\}\) can be
  in set \(C\) since the edge \(st\) is not present in \(G\). Thus both the
  vertices \(q,r\) must be in set \(I\), but this violates the requirement that
  \(I\) must be an independent set.
\item For a vertex $u \in C$ and any induced $P_{3}$ of $G$, say of form $uvw$, that contains $u$, at most one of
the other two vertices in that $P_{3}$ can belong to the set $C$, i.e., $|C \cap
\{v,w\}| \leq 1$; all three vertices of an induced \(P_{3}\) being in \(C\),
  would violate the requirement that \(G[C]\) be a cluster graph.
\item For any induced \(C_{4}\) of the form $uvwx$, one of the following must
  hold:
  \begin{itemize}
  \item Vertices \(u,w\) are in set \(I\), and vertices \(v,x\) are in set
    \(C\), \emph{OR}
  \item Vertices \(u,w\) are in set \(C\), and vertices \(v,x\) are in set
    \(I\).
  \end{itemize}
  That is, both the vertices in any one diagonal pair of vertices in the induced
  \(C_{4}\) belong to the same part in the monopolar partition, and the two
  pairs belong to different parts.

  Suppose not, and, say, \(u\in{}I\) and \(w\in{}C\) hold. Since \(u\in{I}\)
  holds, we get that both the vertices \(v,x\) must be in set \(C\). But then
  the vertices \(v,w,x\) induce a \(P_{3}\) in \(G[C]\), which violates the
  requirement that \(G[C]\) be a cluster graph. Note that because of the missing
  edges, all the four vertices \(u,v,w,x\) cannot belong to set \(C\); and
  because of the present edges, all four of them cannot belong to the set \(I\).
  These cover all the ways in which the property could potentially not hold.

\end{enumerate}

If the given graph \(G=(V,E)\) does not have an induced \(P_3\) 
then \(G\) is a cluster graph. In this case, for any vertex \(v\in{}V\), the
partition \(V=(V\setminus\{v\})\uplus\{v\}\) is a monopolar partition of \(G\),
and there is nothing more to do.
If the graph \emph{does} contain at least one induced \(P_{3}\), then we 
make use of the concept of a \emph{\(C'\)-good induced \(P_{3}\)}, a notion
introduced by Le and Nevries \cite{LE20141}. 

\begin{definition}\textup{\cite{LE20141}}\label{def:Cgood}
  Let \(G\) be a graph, and let \(C'\) be a subset of the vertex set of
  \(G\). We say that an induced $P_3$ in $G$ of the form $uvw$ is
  \emph{$C'$-good}, if it satisfies \emph{at least one} of the following conditions:
      \begin{enumerate}
      \item At least one of the vertices in the set
        \((N[v]\cup{}N[u]\cup{}N[w])\) is in the set \(C'\).
      \item At least one of the vertices \(u,v,w\) is a part of a triangle in $G$.
      \item At least one of the edges of the \(P_{3}\) $uvw$ is part of an
        induced $C_4$ in $G$.
      \end{enumerate}
      Otherwise, we say that the $P_{3}$ $uvw$ is $\emph{C'-bad}$. Finally, we say that graph $G$ is $\emph{C'-good}$ if every induced $P_{3}$ in $G$ is $\emph{C'-good}$.
\end{definition}

Le and Nevries~\cite{LE20141} show how, given an instance \((G,(C',\emptyset))\)
of \probME \emph{where \(G\) is \(C'\)-good}, we can construct in polynomial time a \probTwoSAT
formula \(F(G,C')\) where (i) the variables represent vertices of \(G\), (ii)
the clauses are constructed based on the properties of small induced graphs from \autoref{fig:smallgraphs}, and (iii) the formula \(F(G,C')\) is satisfiable if and only if
\((G,(C',\emptyset))\) is a \yes-instance of \probME. Note that this idea of
using only constant-sized induced subgraphs to completely characterize \probME
using a \probTwoSAT formula does not---unless \(\classP=\classNP\)---work for
graphs in general. It works for \(C'\)-good graphs because, as Le and Nevries
discovered, the concept of \(C'\)-goodness ensures that the graph's complexity
is limited in such a way that induced paws, diamonds, and \(C_{4}\)s provide
enough information to determine a valid monopolar partition.

Given an instance \((G,(\tilde{C},\tilde{I}))\) of \probME, first, we use
\autoref{lem:aside} to transform it into an equivalent instance
\((G',(C',\emptyset))\) where \(C' = \tilde{C} \cup N(\tilde{I})\) and $G' = G - \tilde{I}$. We then construct the corresponding \probTwoSAT formula \(F(G',C')\) as described below,
following Le and Nevries~\cite{LE20141}. Intuitively, the formula is set up in
such a way that the variable corresponding to a vertex \(v\) gets the value
\(1\) in a satisfying assignment of the formula if and only if \(v\) belongs to
the \emph{independent set} part in some valid monopolar extension.

\begin{enumerate}\label{formula:2sat}
    \item There is a Boolean variable \(v\) for each vertex \(v\) of \(G\).
    \item For each vertex \(u \in C'\), add \((\neg u)\) to the formula (a
      \emph{forced-cluster-clause}).
    \item For each edge \(uv\) in \(G\), add \((\neg u \lor \neg v)\) to the
      formula (an \emph{edge-clause}).
    \item For each induced paw \(P(s, t)\) and each induced diamond \(D(s, t)\)
      in \(G\), add \((s\lor{}t)\) to the formula (a \emph{paw-diamond-clause}).
    \item For each induced \(4\)-cycle \(C_4: u v w x\) in \(G\), add the
      clauses \((u\lor{}v)\), \((v\lor{}w)\), \((w\lor{}x)\), and \((x\lor{}u)\)
      to the formula (\emph{\(C_4\)-clauses}).
    \item For each vertex \(x\in{}C'\) and each induced \(P_{3}\) of the form
      \(xvw\) or \(vxw\), add the clause \((v \lor w)\) to the formula (a
      \emph{$P_3$-clause}).
\end{enumerate}
The formula \(F(G, C')\) is the conjunction of all forced-cluster-clauses,
edge-clauses, paw-diamond-clauses, \(C_4\)-clauses, and $P_3$-clauses. Le and
Nevries show that \emph{for a \(C'\)-good graph \(G\)}, the graph \(G\) is
monopolar extendable with respect to \((C',\emptyset)\) if and only if the
formula \(F(G, C')\) is satisfiable:

\begin{lemma}\textup{\cite[Lemma~1 and its proof]{LE20141}}\label{lem:2sat}
  Let $G = (V,E)$ be a graph and $C'$ be a subset of vertices such that $G$ is
  $C'$-good. Then in polynomial time we can construct an instance $F(G, C')$ of \probTwoSAT with the following properties. First, $G$ is $\left(C',\emptyset\right)$-monopolar extendable if and
  only if $F\left(G, C^{\prime}\right)$ is satisfiable. Second, if $F(G,C')$ is satisfiable then, the set $I$ of vertices corresponding to variables set to {\normalfont{True}} and the set $C$ of vertices corresponding to variables set to {\normalfont{False}} in a satisfying assignment to $F(G,C')$ together form a monopolar partition of $G$ i.e., $V = (C \uplus{} I)$ is a monopolar partition that extends $(C', \emptyset)$.
\end{lemma}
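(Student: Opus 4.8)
The plan is to verify the two implications of the first claim together with the constructive second claim, taking $F(G,C')$ exactly as given by the six clause families above. First I would note that the construction is polynomial: every clause is produced by scanning $C'$, the edges, the induced subgraphs on at most four vertices (paws, diamonds, $C_4$s), or the induced $P_3$s incident to $C'$, of which there are only $O(n^4)$. Throughout I use the convention that a variable set to \textbf{True} places its vertex in the independent-set part $I$ and \textbf{False} places it in the cluster part $C$; this is the reading under which the second claim is phrased. Since the second claim already supplies the ``if'' direction of the first (a satisfying assignment is turned into a genuine monopolar extension), it suffices to prove (A) extendable $\Rightarrow$ satisfiable, and (B) that the partition read off from any satisfying assignment is a $(C',\emptyset)$-monopolar extension.

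For (A) I would take an arbitrary monopolar extension $V=C\uplus I$ with $C'\subseteq C$, set each variable to \textbf{True} exactly when its vertex lies in $I$, and check each clause family against the small-graph properties established earlier (the list preceding \autoref{def:Cgood}). Each check is one line: a forced-cluster-clause $(\neg u)$ holds because $u\in C'\subseteq C$; an edge-clause $(\neg u\lor\neg v)$ holds because $I$ is independent; a paw- or diamond-clause $(s\lor t)$ holds because at least one of $s,t$ must lie in $I$; the four $C_4$-clauses hold because the two diagonals of the $C_4$ split across $C$ and $I$, so every edge of the $C_4$ meets $I$; and a $P_3$-clause $(v\lor w)$ coming from $x\in C'$ holds because $x\in C$ and an induced $P_3$ through a cluster vertex can contain at most one further cluster vertex.

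For (B), write $C$ and $I$ for the vertices assigned \textbf{False} and \textbf{True} by a fixed satisfying assignment. That $C'\subseteq C$ is immediate from the forced-cluster-clauses, and that $I$ is independent is immediate from the edge-clauses. The crux is to show that $G[C]$ is a cluster graph, i.e.\ contains no induced $P_3$. Suppose for contradiction that $pqr$ (centre $q$) is an induced $P_3$ with $p,q,r$ all \textbf{False}. The unifying idea is that any positive-literal clause (paw-, diamond-, $C_4$-, or $P_3$-clause) all of whose variables are \textbf{False} is violated; since $p,q,r$ and every vertex of $C'$ are \textbf{False}, it suffices to exhibit one such clause whose vertices lie in $\{p,q,r\}$ together with at most one vertex of $C'$, and $C'$-goodness (\autoref{def:Cgood}) is precisely what provides this. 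I would split on its three conditions. If an edge of $pqr$ lies on an induced $C_4$ (condition~3), that edge's $C_4$-clause is over two \textbf{False} vertices and is violated. If some $z\in N[p]\cup N[q]\cup N[r]$ lies in $C'$ (condition~1), then either $z\in\{p,q,r\}$, in which case $pqr$ is itself a $C'$-incident $P_3$ giving a violated clause, or a short case analysis on the adjacencies of $z$ to $\{p,q,r\}$ always produces an induced $P_3$ with $z$ as one vertex and the other two in $\{p,q,r\}$ (with $z$ an endpoint, e.g.\ $z\text{-}q\text{-}p$ when $z\sim q,\ z\not\sim p$, or $z$ the centre, e.g.\ $p\text{-}z\text{-}r$ when $z\sim p,r$); the associated $P_3$-clause then has both variables \textbf{False}.

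The main obstacle is condition~2: knowing merely that some vertex of $\{p,q,r\}$ lies in a triangle $\{v,a,b\}$ does not by itself pin down a violated clause, since $a,b$ may lie in $I$ and carry no forced value. The plan is to fold the triangle into the $P_3$ and argue that, in every adjacency pattern, one extracts a paw, diamond, or $C_4$ confined to $\{p,q,r\}$. Concretely, if a triangle vertex completes a triangle on an edge of the $P_3$ (say $\{a,p,q\}$), then $\{p,q,r,a\}$ induces a diamond $D(p,r)$ when $a\sim r$ and a paw $P(q,r)$ when $a\not\sim r$, yielding the violated clause $(p\lor r)$ or $(q\lor r)$; and if the two triangle vertices other than the shared one are non-adjacent to the remaining $P_3$-vertices, then the triangle together with a suitable $P_3$-vertex as a pendant induces a paw whose clause, e.g.\ $(q\lor p)$, is again over \textbf{False} vertices. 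Checking that these subcases are exhaustive, and that each extracted configuration is a genuine induced paw/diamond/$C_4$ (so the formula really contains the clause I invoke), is the delicate bookkeeping that carries the argument; this is exactly the structure that \autoref{def:Cgood} was engineered to guarantee.
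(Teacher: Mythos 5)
The first thing to note is that the paper itself does not prove this lemma: it is imported, statement and proof, from Le and Nevries~\cite[Lemma~1 and its proof]{LE20141}, so there is no in-paper argument to compare against and your proposal has to be judged as a from-scratch reconstruction of the cited proof. Most of it is sound. Direction (A) follows exactly from the five small-graph properties listed before \autoref{def:Cgood}; in direction (B) the containment $C'\subseteq C$ and the independence of $I$ are immediate as you say, and your handling of conditions~1 and~3 of \autoref{def:Cgood} is complete and correct --- in particular, your condition-1 case analysis really does produce, for every adjacency pattern of $z\in C'$ to $\{p,q,r\}$, an induced $P_3$ on $z$ and two vertices of $\{p,q,r\}$ (using $p$--$z$--$r$ when $z$ sees both endpoints), hence a violated $P_3$-clause.

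The genuine gap is in condition~2, precisely the step you defer as ``delicate bookkeeping.'' Your stated dichotomy --- either some vertex completes a triangle on an edge of the $P_3$, or the two non-shared triangle vertices are non-adjacent to the remaining $P_3$-vertices --- is \emph{not} exhaustive. Concretely, take the $P_3$ $p$--$q$--$r$ and a triangle $\{p,a,b\}$ where $a$ is adjacent to $r$ but neither $a$ nor $b$ is adjacent to $q$: no vertex completes a triangle on $pq$ or on $qr$, yet $a$ is adjacent to the remaining vertex $r$, so neither of your subcases applies, and no paw or diamond confined to False vertices is exhibited. What rescues the proof in this configuration is a cross-case interaction you never state: since $a\sim p$, $a\sim r$, $a\not\sim q$ and $p\not\sim r$, the set $\{a,p,q,r\}$ induces a $C_4$, so the edge $pq$ lies on an induced $C_4$ and the \emph{condition-3} case (a violated $C_4$-clause) applies after all. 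The completed argument therefore has to be organized with fallthrough: dispose of conditions~3 and~1 first, and then show that when they both fail, any triangle meeting $\{p,q,r\}$ in one vertex $v$ must have its other two vertices non-adjacent to \emph{all} remaining $P_3$-vertices (adjacency to the middle vertex would create a triangle on a $P_3$-edge, reducing to the shared-edge extraction; adjacency to the far endpoint alone would create the forbidden $C_4$), after which your paw $P(v,w)$ extraction closes the case. This exhaustiveness argument, resting on the failure of the earlier conditions, is the actual crux of Le and Nevries' proof, and it is exactly the piece your proposal leaves open.
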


To solve \probMR on an input graph \(G\) using \autoref{lem:2sat} we need to
find a vertex subset \(C'\) of \(G\) such that graph \(G\) is \(C'\)-good. But
why should such a vertex set \(C'\) exist for \emph{every} monopolar graph
\(G\)? And even if it did, how could we find one such set? Le and
Nevries~\cite[Corollary~3]{LE20141} show that \emph{if the input graph \(G\) is
  chair-free} then they can use \autoref{lem:2sat} as a tool to solve an
instance \((G,(C',\emptyset))\) of \probME in polynomial time \emph{even if}
graph \(G\) is \emph{not} $C'$-good. We generalize this further; we show that in
fact we can solve an instance \((G,(C',\emptyset))\) of \probME in an
\emph{arbitrary graph} \(G\) in polynomial time if the set $C'$ is a
\emph{vertex modulator of \(G\) to chair-free graphs}, even if graph \(G\) is
\emph{not} $C'$-good. We thus reduce \probMR to the problem of finding such a
vertex modulator \(C'\) of \(G\), and then solving \probME on the instance
\((G,(C',\emptyset))\).

\hypertarget{proof:GoodChairFreeDS}{\repeatlemma{replem:GoodChairFreeDS}}
\addtocounter{lemma}{-1} 
\begin{proof}
  We first show that there is a vertex modulator that can play the role of
  \(C'\), and then we show that we may assume \(C'\) to be inclusion-minimal.

  If \(G\) is chair-free then the claim is trivially true, with
  \(C'=\emptyset\). So let \(G\) contain at least one induced chair, and let
  \(D\) be a vertex modulator of \(G\). Let \(V=(C\uplus{}I)\) be some monopolar
  partition of \(G\). If \(D\subseteq{}C\) then set \(C'=D\). If
  \(D\nsubseteq{}C\) then let \(Y=(I\cap{}D),\,Z=(C\cap{}D)\). Note that---since
  \(D=Y\uplus{}Z\) is a vertex modulator---each chair in the graph \(G-Z\)
  contains at least one vertex of \(Y\). It follows that the graph obtained from
  \(G-Z\) by deleting all the \emph{neighbours} of the vertices in \(Y\), is
  chair-free. That is, the set \(Z\cup{}N(Y)\) is a vertex modulator as well.
  Observe also that since \(Y\subseteq{}I\) holds, we have (i)
  \(N(Y)\cap{}Y=\emptyset\), and (ii) \(N(Y)\subseteq{}C\). Setting
  \(C'=(Z\cup{}N(Y))\) we get the required vertex modulator.

  Now let \(C'\subseteq{}C\) be a vertex modulator. As long as there is a vertex
  \(x\in{}C'\) such that \(C'\setminus\{x\}\) is a vertex modulator, set
  \(C'\gets{}C'\setminus\{x\}\). Once this procedure stops, we have a set \(C'\)
  that satisfies the conditions of the claim.
\end{proof}

\begin{lemma}\label{thm:chairhitting2sat}
  If a vertex subset \(C'\) is a vertex modulator of graph \(G=(V,E)\) to
  chair-free graphs, then the instance \((G,(C',\emptyset))\) of
  {\normalfont{\probME}} can be solved in polynomial time. Furthermore if graph
  \(G\) is \((C',\emptyset)\)-monopolar extendable, then a monopolar partition
  of \(G\) that extends $(C',\emptyset)$ can be obtained in polynomial time as
  well.
\end{lemma}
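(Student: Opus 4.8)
The plan is to reduce the given instance $(G,(C',\emptyset))$ to an equivalent instance on which Le and Nevries' $2$-SAT machinery of \autoref{lem:2sat} applies directly. The obstacle to applying \autoref{lem:2sat} out of the box is that it requires $G$ to be $C'$-good, whereas here we only know that $C'$ is a vertex modulator to chair-free graphs; so $G$ may well contain $C'$-bad induced $P_3$'s. The key idea is that the hypothesis ``$C'$ is a vertex modulator'' is exactly what controls where $C'$-bad $P_3$'s can occur: every induced chair of $G$ must meet $C'$. I would first establish the following structural claim. \emph{If $uvw$ is a $C'$-bad induced $P_3$ of $G$, then in \emph{every} monopolar extension of $(C',\emptyset)$ the value of these three vertices is forced, in a way that is already captured by the clauses of $F(G,C')$.} Concretely, a $C'$-bad $P_3$ has no neighbour in $C'$, lies in no triangle, and none of its edges lies in an induced $C_4$; I expect to show that these exclusions, combined with chair-freeness of $G-C'$, pin down the behaviour of such a $P_3$.

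First I would spell out what $C'$-badness buys us. Since $uvw$ is triangle-free and $C'$-disjoint in its closed neighbourhood, the vertices $u,v,w$ lie in $G-C'$, which is chair-free. Suppose, for contradiction, that a $C'$-bad $P_3$ could be ``broken'' in two genuinely different ways by monopolar extensions---this is precisely the situation in which the local $2$-SAT clauses would fail to determine the partition. I would argue that any such ambiguity forces an induced chair avoiding $C'$, contradicting the modulator hypothesis: the degree-$3$ centre $b$ of a chair is a natural candidate for $v$, and the pendant structure around it is exactly what a problematic $P_3$ extension would create once we rule out triangles and $C_4$'s through the edges of the $P_3$. The heart of the argument is a careful case analysis on how $u,v,w$ and their neighbours can be assigned to $C$ and $I$; each ``bad'' configuration either repeats a forbidden small subgraph (paw, diamond, $C_4$, triangle) already excluded by $C'$-badness, or exhibits a chair disjoint from $C'$.

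Having established that every $C'$-bad $P_3$ behaves deterministically, I would then show that the same formula $F(G,C')$ constructed for $C'$-good graphs still correctly characterises $(C',\emptyset)$-monopolar extendability here. The forward direction is unchanged from \autoref{lem:2sat}: given a monopolar extension $V=(C\uplus I)$, setting the variable of each vertex to $1$ exactly when it lies in $I$ satisfies all forced-cluster-, edge-, paw-diamond-, $C_4$-, and $P_3$-clauses, since these clauses encode properties (1)--(5) that hold in \emph{any} monopolar partition regardless of goodness. For the converse, given a satisfying assignment, I would take $I=f^{-1}(1)$ and $C=f^{-1}(0)$ and verify that $V=(C\uplus I)$ is a monopolar partition extending $(C',\emptyset)$. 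Here $I$ is independent by the edge-clauses and $C'\subseteq C$ by the forced-cluster-clauses, so the only thing to check is that $G[C]$ is a cluster graph, i.e.\ that $C$ contains no induced $P_3$. The structural claim is what makes this work: a $P_3$ entirely inside $C$ would be $C'$-bad (a $C'$-good $P_3$ with a vertex in $C$ is already ruled out by a $P_3$-clause, a paw-diamond-clause, or a $C_4$-clause), and the determinism of $C'$-bad $P_3$'s shows such an all-$C$ configuration cannot survive in a satisfying assignment.

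The main obstacle I expect is the structural claim itself---proving that chair-freeness of $G-C'$ forces $C'$-bad induced $P_3$'s to be ``rigid.'' This is genuinely a statement about chairs, paws, diamonds, and $C_4$'s interacting, and it requires translating ``$C'$ is a vertex modulator to chair-free graphs'' into a local statement about each offending $P_3$. Once that lemma is in hand, the construction and correctness of $F(G,C')$ follow the template of \autoref{lem:2sat} almost verbatim, and the polynomial running time is immediate: there are $O(n^3)$ induced $P_3$'s, $O(n^4)$ induced $C_4$'s, paws, and diamonds to enumerate, the formula $F(G,C')$ has $O(n^4)$ clauses, and $2$-SAT is solvable in linear time, so the whole algorithm---including recovering the partition via \autoref{lem:2sat}---runs in polynomial time.
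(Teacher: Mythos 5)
Your overall goal---extending Le--Nevries' \probTwoSAT characterisation from $C'$-good graphs to graphs where $C'$ is merely a vertex modulator to chair-free graphs---is the right one, but the central step you rely on is false. Your ``structural claim'' asserts that $C'$-bad $P_3$'s are rigid, and in particular that a satisfying assignment of $F(G,C')$ cannot place a $C'$-bad $P_3$ entirely inside $C$. Take $G=C_6$ (a $6$-cycle) with $C'=\emptyset$; this is a legitimate instance, since $C_6$ has maximum degree $2$ and is therefore chair-free, so $\emptyset$ is a modulator. Every induced $P_3$ of $C_6$ is $C'$-bad (no vertex of $C'$, no triangles, no induced $C_4$), so $F(C_6,\emptyset)$ consists of edge-clauses only. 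The all-False assignment satisfies it, yet it places the entire graph---in particular every bad $P_3$---inside $C$, and $G[C]=C_6$ is not a cluster graph. (The path $P_5$ or even the claw $K_{1,3}$, each with $C'=\emptyset$, give similar counterexamples.) The point is that $C'$-badness means \emph{precisely} that the $P_3$ contributes no paw-, diamond-, $C_4$-, or $P_3$-clauses, so the formula does not constrain it at all: bad $P_3$'s are maximally \emph{flexible}, the exact opposite of rigid. Consequently your converse direction (satisfying assignment $\Rightarrow$ monopolar partition) collapses, and with it the partition-recovery part of the statement; there is also no argument left for why satisfiability of $F(G,C')$ should imply extendability.

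The paper exploits this flexibility rather than fighting it. After a preprocessing step that removes degree-$1$ vertices outside $C'$ (\autoref{lem:deg1}), it proves that the middle vertex $v$ of any $C'$-bad $P_3$ $uvw$ has degree exactly $2$, and so do $u$ and $w$---this is where chair-freeness of $G-C'$ is used, via explicit chair-finding arguments---and then establishes a reduction rule: $G$ is $(C',\emptyset)$-monopolar extendable if and only if $G-\{v\}$ is. Iterating this deletion (at most $n$ times) produces a $C'$-good graph, on which \autoref{lem:2sat} applies verbatim; a monopolar partition of the original graph is then reconstructed by re-inserting the deleted vertices one by one, using a case analysis on where $u$, $w$, and their unique other neighbours landed. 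If you wanted to salvage your formula-only approach, you could in fact deduce from this reduction rule that satisfiability of $F(G,C')$ on the original graph is equivalent to extendability (deleting $v$ removes only the two edge-clauses at $v$, which are satisfied by setting $v$ to False), but even then a satisfying assignment does not by itself yield a valid partition, so a lifting argument of the paper's kind cannot be avoided.
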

\begin{proof}
  Our proof follows the general approach that Le and Nevries used in the proof
  of their result on a superclass of chair-free
  graphs~\cite[Theorem~5]{LE20141}. Let \((G,(C',\emptyset))\) be the given
  instance of \probME, with \(G\) being a graph on \(n\) vertices and \(C'\)
  being a vertex modulator of \(G\). We preprocess the input using
  \autoref{lem:deg1} to ensure that every vertex of graph \(G\) which is not in
  the set \(C'\), has degree at least \(2\). Assume from now on that this
  condition holds for graph \(G\).

  We then check if graph \(G\) is \(C'\)-good. This takes \(\Oh(n^{4})\) time
  using a straightforward algorithm. If \(G\) is \(C'\)-good then we apply
  \autoref{lem:2sat} to construct the \probTwoSAT formula \(F(G,C')\) and check
  if the formula is satisfiable. We can construct the \probTwoSAT formula as per
  the description on page~\pageref{formula:2sat} in \(\Oh(n^{3})\) time, and
  check whether it is satisfiable in time which is linear in the size of the
  resulting formula~\cite{aspvall1982linear}. This part of the algorithm thus
  solves the problem in \(\Oh(n^{3})\) time, if graph \(G\) happens to be
  \(C'\)-good. If the formula is satisfiable, then we construct a monopolar
  partition of \(G\) that extends $(C',\emptyset)$ in polynomial time, using
  \autoref{lem:2sat}.

  So suppose graph \(G\) is not \(C'\)-good. Then the algorithm that checks for
  \(C'\)-goodness will find an induced \(P_{3}\), say \(uvw\), which is
  \emph{\(C'\)-bad}\footnote{See \autoref{def:Cgood}.}. We show, using a series
  of claims, that we can safely delete vertex \(v\) from \(G\) to get an
  equivalent instance of \probME.

  \begin{numclaim}\label{claim:vdegree2}
    The middle vertex \(v\) of the \(C'\)-bad \(P_{3}\) has degree exactly
    \(2\) in \(G\). 
  \end{numclaim}
  \begin{proof}
    Suppose not, and let \(x\notin\{u,w\}\) be a third neighbour of \(v\). Then
    vertex \(x\) cannot be in the set \(C'\), since that would contradict the
    first condition in the definition of a \(C'\)-bad \(P_{3}\). Vertex \(x\)
    cannot be adjacent to either of the two vertices \(\{u,w\}\), since this
    would contradict the second condition in the definition of a \(C'\)-bad
    \(P_{3}\). Consider the vertex $u$. As all vertices have degree at least
    \(2\), $u$ has a neighbour $y \neq v$ and $y \notin C'$. As $uvw$ is a
    \(C'\)-bad \(P_{3}\), $y$ cannot be equal to $w$ and it cannot be adjacent
    to $v$, $w$ or $x$. Thus we get that the vertices \(\{w,v,x,u,y\}\) form an
    induced chair \chair{w,v,u,x,y} that is disjoint from the set \(C'\); this
    contradicts the assumption that \(C'\) is a vertex modulator of graph \(G\).
  \end{proof}

  Consider the sets \(U=(N(u)\setminus\{v\}),W=N(w)\setminus\{v\} \).

  \begin{numclaim}\label{claim:disjointUWC}
    The sets \(U,W\), and \(C'\) are pairwise disjoint.
  \end{numclaim}
  \begin{proof}
    If there is a vertex \(x\) in \(G\) that belongs to both the sets \(U\) and
    \(W\), then either \(xuvw\) is an induced \(C_{4}\), or \(xuv\) is a
    triangle. The first implication contradicts the third condition in the
    definition  of a \(C'\)-bad \(P_{3}\). The second
    implication contradicts the second condition in the definition of a
    \(C'\)-bad \(P_{3}\). Thus we get that the sets \(U,W\) are disjoint.

    If there is a vertex \(x\) in \(G\) that belongs to both the sets \(U\)
    and \(C'\) then this contradicts the first condition in the definition of a
    \(C'\)-bad \(P_{3}\). Thus the sets \(U,C'\) are disjoint. A symmetric
    argument tells us that the sets \(W,C'\) are disjoint.
  \end{proof}

  \begin{numclaim}\label{claim:singletonsUW}
    The sets \(U,W\) contain exactly one vertex each: \(|U|=|W|=1\).
  \end{numclaim}
  \begin{proof}
    If the set \(U\) contains no vertex then vertex \(u\) has degree exactly
    \(1\) in graph \(G\), which contradicts the fact that every vertex of \(G\)
    which is not in the set \(C'\), has degree at least \(2\). Thus we get that
    \(|U|\geq{}1\) holds.

    Suppose the set \(U\) contains at least two vertices, and let \(x,y\) be two
    vertices in \(U\). Then \(xy\) is not an edge in \(G\), since otherwise the
    vertices \(u,x,y\) form a triangle in \(G\) and this contradicts the second
    condition for \(uvw\) being a \(C'\)-bad \(P_{3}\). Neither of the vertices
    \(x,y\) can be adjacent to vertex \(v\), since---given that \(uv\) is an
    edge---this would contradict the second condition in the definition of a
    \(C'\)-bad \(P_{3}\). These observations along with
    \autoref{claim:disjointUWC} tell us that the vertices \(x,u,y,v,w\) form an
    induced chair \chair{x,u,y,v,w} in \(G\) which is disjoint
    from the set \(C'\); this contradicts the assumption that \(C'\) is a
    vertex modulator of graph \(G\).

    Thus we get that \(|U|=1\) holds. A symmetric argument tells us that
    \(|W|=1\) holds.
  \end{proof}

  \begin{numclaim}\label{claim:deg2each}
    The two end-vertices \(u,w\) of the \(C'\)-bad \(P_{3}\), have degree exactly
    \(2\) each in \(G\).
  \end{numclaim}
  \begin{proof}
    Follows directly from \autoref{claim:singletonsUW} and the definitions of
    the sets \(U,W\).
  \end{proof}

  \begin{numclaim}\label{claim:reductionRule}
    Graph \(G\) is \((C',\emptyset)\)-monopolar extendable if and only if its
    subgraph \(G-\{v\}\) is \((C',\emptyset)\)-monopolar extendable.
  \end{numclaim}
  \begin{proof}
    Observe first that since \(v\notin{}C'\) holds, the set \(C'\) is a subset
    of the vertex set of graph \(G-\{v\}\). So it is valid to ask whether graph
    \(G-\{v\}\) is \((C',\emptyset)\)-monopolar extendable.

    Suppose graph \(G\) is \((C',\emptyset)\)-monopolar extendable, and let
    \(V=(C\uplus{}I)\) be a monopolar partition that extends \((C',\emptyset)\).
    It is straightforward to verify that
    \((V\setminus\{v\})=((C\setminus\{v\})\uplus{}(I\setminus\{v\}))\) is a
    monopolar partition of graph \(G-\{v\}\) that extends the pair
    \((C',\emptyset)\). Thus graph \(G-\{v\}\) is \((C',\emptyset)\)-monopolar
    extendable.

    Conversely, suppose graph \(G-\{v\}\) is \((C',\emptyset)\)-monopolar
    extendable, and let \((V\setminus\{v\})=(C\uplus{}I)\) be a monopolar
    partition of \(G-\{v\}\) that extends \((C',\emptyset)\). We show that we
    can add vertex \(v\) back to \(G-\{v\}\) in such a way as to get a monopolar
    partition \((\tilde{C}\uplus\tilde{I})\) of \(G\) that also extends
    \((C',\emptyset)\). From \autoref{claim:vdegree2} we know that vertex \(v\)
    has exactly two edges incident on it in graph \(G\), namely, \(vu\) and
    \(vw\). We consider the various cases of how vertices \(u,w\) are
    distributed among the sets \(C,I\) in the given monopolar partition of
    \(G-\{v\}\):
    \begin{description}
    \item[\(u\in{}C,w\in{}C\):] In this case we add vertex \(v\) to the
      independent set side; we set \(\tilde{C}=C,\tilde{I}=(I\cup\{v\})\). Set
      \(\tilde{I}\) is an independent set in \(G \) since both the neighbours of
      vertex \(v\) are in the set \(C\). Hence \((\tilde{C},\tilde{I})\) is a
      valid monopolar partition of graph \(G\).
    \item[\(u\in{}I,w\in{}I\):] In this case we add vertex \(v\) to the
      cluster side; we set \(\tilde{C}=(C\cup\{v\}),\tilde{I}=I\). Since both
      the neighbours of vertex \(v\) are not in the set \(C\), set \(\tilde{C}\)
      induces a cluster graph in \(G \) in which vertex \(v\) appears as a
      clique of size \(1\). Hence \((\tilde{C},\tilde{I})\) is a valid monopolar
      partition of graph \(G\).
    \item[\(u\in{}I,w\in{}C\):] Let \(x\) be the unique vertex (see
      \autoref{claim:singletonsUW}) in the set \(W\). We consider two sub-cases:
      \begin{description}
      \item[\(x\in{}I\):] In this case we add vertex \(v\) to the cluster side;
        we set \(\tilde{C}=(C\cup\{v\}),\tilde{I}=I\). The two vertices \(v,w\)
        form a clique of size \(2\) in \(G[\tilde{C}]\), since the only other
        neighbour \(u\) of \(v\) and the only other neighbour \(x\) of \(w\) are
        in the independent set side. Thus \((\tilde{C},\tilde{I})\) is a valid
        monopolar partition of graph \(G\).
      \item[\(x\in{}C\):] In this case we add vertex \(v\) to the cluster side
        \emph{and move} vertex \(w\) to the independent set side; we set
        \(\tilde{C}=((C\setminus{}\{w\})\cup\{v\}),\tilde{I}=(I\cup\{w\})\).
        Both the neighbours \(x,v\) of vertex \(w\) are in the set
        \(\tilde{C}\), so vertex \(w\) has no neighbours in the set
        \(\tilde{I}\). The only new neighbour \(v\) of vertex \(u\) is in the
        set \(\tilde{C}\), so vertex \(u\) has no neighbours in the set
        \(\tilde{I}\). It follows that \(\tilde{I}\) is an independent set in
        graph \(G\). Since both the neighbours \(u,w\) of the new vertex \(v\)
        are in the set \(\tilde{I}\), vertex \(v\) forms a clique of size one in
        the subgraph \(G[\tilde{C}]\). Thus \((\tilde{C},\tilde{I})\) is a valid
        monopolar partition of graph \(G\).
      \end{description}
    \item[\(u\in{}C,w\in{}I\):] This case is symmetric to the previous one.
    \end{description}
    Thus we can construct a monopolar partition of \(G\) that extends
    \((C',\emptyset)\) in every case, and this completes the proof of the claim.
  \end{proof}

  The above claims describe how to obtain a valid monopolar partition of $G$
  that extends $(C',\emptyset)$ from the monopolar partition of $G-\{v\}$ in
  polynomial time. Applying induction whose base case is where the graph is
  $C'$-good, we conclude that we can construct a monopolar partition of $G$ that
  extends $(C',\emptyset)$ in polynomial time if $C'$ is a vertex modulator of
  $G$, or conclude that no such partition exists.

  Our algorithm (procedure \textproc{USE-2-SAT}, \autoref{alg:use2sat} in
  Appendix~\ref{app:pseudocode}) for solving the instance \((G,(C',\emptyset))\)
  of \probME proceeds as follows: In \(\Oh(n^{4})\) time we either find that
  graph \(G\) is \(C'\)-good, or we find a \(P_{3}\) \(uvw\) in \(G\) which is
  not \(C'\)-good. In the former case we solve the instance as we described
  above. In the latter case we use \autoref{claim:reductionRule} to create an
  equivalent instance with one fewer vertex, and recurse on this reduced
  instance. This latter step can happen at most \(n\) times, so our algorithm
  solves the instance \((G,(C',\emptyset))\) of \probME where \(C'\) is a vertex
  modulator of \(G\) to chair-free graphs, in \(\Oh(n^{5})\) time where \(n\) is
  the number of vertices in \(G\). It is evident that we can construct the
  monopolar partition in polynomial time as well using the proofs of
  \autoref{claim:reductionRule} and \autoref{lem:2sat}.
\end{proof}

As \probLMP and \probME are polynomially equivalent as shown in
\autoref{lem:lmp2me}, we immediately get a similar result for \probLMP when the
set of vertices \(\{v : L(v)=\{\hat{C}\}\}\) forms a vertex modulator of the
input graph:

\begin{corollary}\label{cor:lmp2sat}
  Let \((G,L)\) be an instance of {\normalfont{\probLMP}} where the set
  \(C'=\{v : L(v)=\{\hat{C}\}\}\) of vertices that are assigned to the cluster
  side, forms a vertex modulator of graph \(G\) to chair-free graphs. Such an
  instance can be solved in polynomial time.
\end{corollary}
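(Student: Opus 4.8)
The plan is to peel the given \probLMP instance down to exactly the situation already handled by \autoref{thm:chairhitting2sat}, using the two reductions \autoref{lem:lmp2me} and \autoref{lem:aside} as the two peeling steps. First I would invoke \autoref{lem:lmp2me} to convert the instance \((G,L)\) into an equivalent instance \((G,(C',I'))\) of \probME, where \(C'=\{v : L(v)=\{\hat{C}\}\}\) is the given chair-free modulator and \(I'=\{v : L(v)=\{\hat{I}\}\}\). Since \(L(v)\) cannot equal both \(\{\hat{C}\}\) and \(\{\hat{I}\}\), the sets \(C'\) and \(I'\) are disjoint. I would then check, in polynomial time, whether \(I'\) is an independent set in \(G\); if it is not, then by the definition of a monopolar extension no valid partition can place all of \(I'\) on the independent-set side, so we output \no.

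Assuming \(I'\) is independent, I would apply \autoref{lem:aside}, which tells us that \(G\) is \((C',I')\)-monopolar extendable if and only if the induced subgraph \(G'=G-I'\) is \(((C'\cup N(I')),\emptyset)\)-monopolar extendable. The crucial observation—and the one step that needs a short argument rather than a direct citation—is that the set \((C'\cup N(I'))\) is itself a vertex modulator of \(G'\) to chair-free graphs. To see this, note first that \(C'\cup N(I')\) is a subset of the vertex set of \(G'\): we have \(C'\cap I'=\emptyset\), and because \(I'\) is independent we also have \(N(I')\cap I'=\emptyset\). Now
\[
G' - (C'\cup N(I')) \;=\; G - \bigl(I'\cup C'\cup N(I')\bigr),
\]
which is an induced subgraph of \(G-C'\). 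Since \(C'\) is a vertex modulator of \(G\) to chair-free graphs, \(G-C'\) is chair-free, and chair-freeness is a hereditary property (any induced subgraph of a chair-free graph is chair-free). Hence \(G' - (C'\cup N(I'))\) is chair-free, which is exactly the claim.

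With this in hand, the reduced instance \((G',((C'\cup N(I')),\emptyset))\) of \probME satisfies the hypothesis of \autoref{thm:chairhitting2sat}, so I would apply that lemma to decide it in polynomial time, and to recover a monopolar partition \((\tilde{C},\tilde{I})\) of \(G'\) extending \((C'\cup N(I'),\emptyset)\) when one exists. Finally I would lift this partition back: following the \emph{conversely} direction of \autoref{lem:aside}, the pair \((\tilde{C},\tilde{I}\cup I')\) is a monopolar partition of \(G\) extending \((C',I')\), and translating \(\hat{C}\) and \(\hat{I}\) back via \autoref{lem:lmp2me} yields a monopolar partition of \(G\) respecting \(L\). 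Since every step runs in polynomial time, the whole procedure does. I do not expect any real obstacle here: the only non-bookkeeping point is the modulator-preservation claim above, and that follows immediately from heredity of chair-freeness together with the disjointness of \(I'\) from \(C'\cup N(I')\); everything else is a routine chaining of the three lemmas.
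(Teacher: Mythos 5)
Your proposal is correct and follows essentially the same route as the paper: convert via \autoref{lem:lmp2me}, reduce via \autoref{lem:aside}, observe that \(C'\cup N(I')\) remains a vertex modulator of \(G-I'\) to chair-free graphs, and finish with \autoref{thm:chairhitting2sat}. The paper presents the corollary as an immediate consequence of \autoref{lem:lmp2me} and defers this exact chain (including the modulator-preservation step, which it asserts without proof) to the proof of \autoref{thm:MEClawFreeDS}; your heredity argument for that step is precisely the missing detail and is valid.
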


We now have all the ingredients for proving our generalization of Le and
Nevries' polynomial-time algorithm for \probME on chair-free graphs: 

\hypertarget{proof:MEClawFreeDS}{\repeattheorem{repthm:MEClawFreeDS}} 
\begin{proof}
  Let $(G,(C',I'))$ be an instance of \probME where $C'$ is a vertex modulator
  of $G$. Assuming this is an \yesinstance, $I'$ is an independent set. Hence, we can apply \autoref{lem:aside} to obtain an equivalent instance
  $(G',((C' \cup N(I')),\emptyset))$ of \probME where \(G'= G-I'\). As $C'$ is a
  vertex modulator of $G$, we get that \(C'\cup{}N(I')\) is a vertex modulator
  of $G'$ as well. Thus by \autoref{thm:chairhitting2sat}, the instance
  \((G',((C'\cup{}N(I')),\emptyset))\) of \probME can be solved in polynomial
  time. Furthermore if graph \(G\) is \(((C'\cup{}N(I')),\emptyset)\)-monopolar
  extendable, then a monopolar partition that extends
  \(((C'\cup{}N(I')),\emptyset)\) can be obtained in polynomial time as well.
  Thus the instance \((G,(C',I'))\) of \probME can be solved in polynomial time.
  In case $G$ is $(C',I')$ monopolar extendable, let
  \(V\setminus{}I'=(\tilde{C}\uplus{}\tilde{I})\) where
  \((C'\cup{}N(I'))\subseteq\tilde{C}\) be the monopolar partition returned by
  applying \autoref{thm:chairhitting2sat}. Now,
  \(V=(\tilde{C}\uplus{}(\tilde{I}\cup I'))\) is a monopolar partition of $G$
  that extends $(C',I')$. Thus a monopolar partition of $G$ that extends
  $(C',I')$ can be obtained in polynomial time as well.
\end{proof}

A straightforward approach to solving \probMR on arbitrary graphs \(G\) would be
to first find a vertex modulator. Next, we determine which subset of vertices
from this modulator should be assigned to \(C'\), while assigning the rest to
\(I'\). However, as noted in~\autoref{lem:aside}, it is sufficient to consider
whether \( G' = (G - I') \) is \(((C'\cup{}N(I')),\emptyset)\)-monopolar
extendable. This approach resolves the monopolarity problem for general graphs
on \(n\) vertices in \(\OhStar(2^n)\) time. Remarkably, it turns out that we can
solve this problem in \(\OhStar({1.3734}^{n})\) time by leveraging the
properties of chairs.

We show how to solve the more general \probME in \(\OhStar({1.3734}^{n})\) time.
Let \((G=(V,E),C',I')\) be an instance of \probME. If either the graph \(G\) is
chair-free or the set \(C'\) is a vertex modulator, then we can solve this
instance in polynomial time. Such instances form the base cases of \probME that
are easy to solve. We start with a simple observation:

\begin{lemma}\label{lem:ME_reduction}
  Let \((G=(V,E),C',I')\) be an instance of \probME. In polynomial time we can
  either solve the instance or derive an \emph{equivalent} instance
  \((\tilde{G},\tilde{C},\tilde{I})\) of \probME with the following property:
  there exists at least one induced chair---say \chair{a,b,c,d,e}---in
  \(\tilde{G}\) such that \(\{a,b,c,d,e\}\cap(\tilde{C}\cup\tilde{I})=\emptyset\).
\end{lemma}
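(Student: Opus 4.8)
The plan is to normalise away the independent-set side first, and then let the failure of the ``$C'$ is a modulator'' base case hand us the fresh chair for free. First I would invoke \autoref{lem:aside}: I check in polynomial time whether $I'$ is an independent set in $G$. If it is not, then by \autoref{lem:aside} the graph $G$ is not $(C',I')$-monopolar extendable, so I output \no and the instance is solved. Otherwise \autoref{lem:aside} yields an equivalent instance $(G', C'', \emptyset)$ with $G' = G - I'$ and $C'' = C' \cup N(I')$; crucially, its independent-set part is now empty, so ``undecided'' will come to mean simply ``outside $C''$''.

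Next I would test the two base cases on this normalised instance. Detecting an induced chair in an $n$-vertex graph is polynomial-time (for instance by inspecting every $5$-element vertex subset), so in polynomial time I can decide whether $G' - C''$ is chair-free, i.e.\ whether $C''$ is a vertex modulator of $G'$ to chair-free graphs. If $C''$ is such a modulator---which in particular subsumes the case where $G'$, and hence $G$, is itself chair-free---then by \autoref{thm:MEClawFreeDS} the instance $(G',(C'',\emptyset))$, and therefore the original instance, can be solved in polynomial time, and we are done.

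The remaining case is when $C''$ is \emph{not} a vertex modulator of $G'$, i.e.\ when $G' - C''$ still contains an induced chair. Here I would locate one such chair \chair{a,b,c,d,e} inside $G' - C''$ in polynomial time (the same $5$-subset scan that detects non-chair-freeness also returns a witness), and output the equivalent instance $(\tilde{G},\tilde{C},\tilde{I}) = (G', C'', \emptyset)$ together with this chair. Since the chair lives in $G' - C''$ each of its vertices lies outside $C''$, and as $\tilde{I} = \emptyset$ we get $\{a,b,c,d,e\}\cap(\tilde{C}\cup\tilde{I}) = \{a,b,c,d,e\}\cap C'' = \emptyset$, exactly as required. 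Equivalence of $(\tilde{G},\tilde{C},\tilde{I})$ with the input is guaranteed by \autoref{lem:aside}, since we have already ensured $I'$ is independent.

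Every step above is plainly polynomial, so the only point needing care is establishing that a fresh chair must exist whenever the instance has not already been solved. This is precisely what the base-case dichotomy provides: if no induced chair of $G'$ avoids $C''$, then $G' - C''$ is chair-free, so $C''$ is a modulator and the instance falls into the \autoref{thm:MEClawFreeDS} base case. The main (minor) obstacle is therefore the bookkeeping around $I'$: one must push $N(I')$ into the cluster side via \autoref{lem:aside} \emph{before} searching for the chair, so that the chair we find is disjoint from \emph{all} already-decided vertices $\tilde{C}\cup\tilde{I}$, and not merely from $C'$.
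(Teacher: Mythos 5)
Your proposal is correct and is essentially the paper's own proof: both arguments absorb \(N(I')\) into the cluster side, use the base case ``\(\tilde{C}\) is a vertex modulator to chair-free graphs, hence the instance is polynomial-time solvable'' (\autoref{thm:chairhitting2sat}, \autoref{thm:MEClawFreeDS}) to solve the instance outright, and obtain the fresh chair exactly from the failure of that modulator check. The only differences are cosmetic---you apply \autoref{lem:aside} to delete \(I'\) so that \(\tilde{I}=\emptyset\) makes the disjointness requirement trivial, whereas the paper keeps \(\tilde{G}=G\) and \(\tilde{I}=I'\) and argues that a chair avoiding \(\tilde{C}\) must also avoid \(\tilde{I}\) because \(N(\tilde{I})\subseteq\tilde{C}\) and chairs are connected---apart from one harmless slip: \(G-I'\) being chair-free does not imply \(G\) is chair-free (your parenthetical runs that implication the wrong way), but nothing in your argument depends on it.
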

\begin{proof}
  If graph \(G\) is chair-free then we apply
  \autoref{thm:chairhitting2sat}---with \(C'\) set to \(\emptyset\)---to solve
  the instance in polynomial time. So suppose that \(G\) contains one or more
  chairs as induced subgraphs. Set
  \(\tilde{G}\gets{}G,\tilde{C}\gets{}C',\tilde{I}\gets{}I'\). Now,
  \begin{enumerate}
  \item If \(v\in\tilde{I}\) holds for a vertex \(v\) then we---safely---add all
    neighbours \(x\) of \(v\) to the set \(\tilde{C}\).
  \item Suppose the chair in \autoref{subfig:chair} occurs as an induced
    subgraph of graph \(\tilde{G}\). Note that this subgraph is connected and
    has missing edges (e.g., \(ad\)). So there is no valid monopolar partition
    of \(\tilde{G}\) in which all the vertices of this induced chair belong to
    the set \(\tilde{C}\). Thus if \(\tilde{G}\) has an induced chair all of
    whose vertices belong to the set \(\tilde{C}\), then we---safely---return
    \no.
  \item If the set \(\tilde{C}\) is a vertex modulator, then we solve the
    instance in polynomial time using \autoref{thm:chairhitting2sat}.
  \end{enumerate}

  The resulting instance \((\tilde{G},\tilde{C},\tilde{I})\) is equivalent to
  the original instance \((G,C',I')\). Let \chair{a,b,c,d,e} be an induced chair
  in \(\tilde{G}\); such a chair must exist, or we would have solved this
  instance already. From point (3) we get that
  \(\{a,b,c,d,e\}\cap\tilde{C}=\emptyset\) holds. Point (1) then implies that
  \(\{a,b,c,d,e\}\cap\tilde{I}=\emptyset\) holds as well. Thus we get that
  \(\{a,b,c,d,e\}\cap(\tilde{C}\cup\tilde{I})=\emptyset\) holds for the chair
  \chair{a,b,c,d,e}.
\end{proof}

We now describe how to branch on such a chair, to eventually obtain an instance
where the set \(\tilde{C}\) is a vertex modulator.

So let \chair{a,b,c,d,e} be an induced chair in $G = (V,E)$ such that
\(\{a,b,c,d,e\}\cap(\tilde{C}\cup\tilde{I})=\emptyset\) holds. Let
\(V=(C\uplus{}I)\) be a(n unknown) valid monopolar partition of $G$. Our
algorithm for \probME does an exhaustive branching on the vertices $b$ and
$e$---see \autoref{subfig:chair} for the notation---being in $C$ or $I$:

\begin{description}
\item[$b \in{} C, e \in{} C$:] Since $G[C]$ is a cluster graph and the edge $be$
  is not present in graph \(G\), vertex $d$ cannot belong to the set $C$. So we
  have \(d\in{}I\). We cannot conclude anything about vertices $a$ and \(c\)
  except that they cannot be together in set \(C\).

\item[$b \in I, e \in C$:] Since $G[I]$ is edge-less and the edges $ab$, $bc$
  and $bd$ are present in graph \(G\), vertices $a$, $c$ and $d$ cannot belong
  to the set $I$. So we have $a \in C$, $c \in C$ and $d \in C$.

\item[$b \in C, e \in I$:] Vertex $d$ cannot belong to the set $I$ as $de$ is an
  edge and $e \in I$. Thus $d \in C$. Also, vertex $a$ cannot belong to the set
  $C$ as the edge $ad$ is not present and vertex \(c\) cannot belong to the set
  $C$ as the edge \(cd\) is not present. So we have $d \in C$, $a \in I$ and $c
  \in I$.

\item[$b \in I, e \in I$:] Here, $a$, $c$ and $d$ cannot belong to the set $I$
  as $ab$, $bc$ and $bd$ are edges and $b \in I$ where $G[I]$ is needed to be an
  independent set. So we have $a \in C$, $c \in C$ and $d \in C$.
\end{description}

Procedure \textproc{ME-Algorithm} (\autoref{alg:me} in
Appendix~\ref{app:pseudocode}) adopts the above branching strategy to solve
\probME. This procedure returns a monopolar partition if the input is a
\yes-instance.

Procedure \textproc{USE-2-SAT} (\autoref{alg:use2sat},
Appendix~\ref{app:pseudocode}) is a direct implementation of the proof of
\autoref{thm:chairhitting2sat}. So this procedure is correct, and it returns a
valid monopolar partition when the input is a \yes-instance. The correctness of
procedure \textproc{ME-Algorithm} (\autoref{alg:me}) now follows from the fact
that it does exhaustive branching on valid partitions, and from
\autoref{cor:lmp2sat}, \autoref{thm:nevries2sat}, and
\autoref{thm:chairhitting2sat}.

\begin{lemma}\label{lem:exactruntime}
  The call {\normalfont\(\textproc{ME-Algorithm}(G,C',I')\)} terminates in
  $\OhStar(1.3734^{n})$ time where \(n\) is the number of vertices in graph
  \(G\).
\end{lemma}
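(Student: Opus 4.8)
The plan is to bound the size of the recursion tree of \textproc{ME-Algorithm} by a standard branching analysis, using as measure the number of \emph{undecided} vertices, namely \(\mu = |V \setminus (\tilde{C} \cup \tilde{I})|\), which never exceeds \(n\). First I would observe that every non-branching step of the procedure runs in polynomial time: by \autoref{lem:ME_reduction} we can, in polynomial time, either solve the current instance outright (when \(\tilde{G}\) is chair-free or \(\tilde{C}\) is already a vertex modulator, handled via \autoref{thm:chairhitting2sat}) or produce an equivalent instance containing an induced chair \chair{a,b,c,d,e} all five of whose vertices are undecided. These polynomial-time reductions only move vertices into \(\tilde{C}\) or \(\tilde{I}\), so they never increase \(\mu\); hence they contribute only a polynomial factor per node of the recursion tree, and it remains to bound the number of leaves. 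The base case \(\mu = 0\) is itself a modulator instance, solved in polynomial time by \autoref{thm:chairhitting2sat}.

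Next I would analyse the four-way branch on the membership of \(b\) and \(e\), using the case analysis established immediately before the lemma. In the branch \(b \in C, e \in C\) the assignment forces \(d \in I\) while leaving \(a\) and \(c\) undecided, so exactly three chair vertices move from undecided to decided and \(\mu\) drops by \(3\). In each of the remaining three branches (\(b \in I, e \in C\); \(b \in C, e \in I\); and \(b \in I, e \in I\)) the forced consequences place all of \(a, c, d\) together with \(b, e\), so all five chair vertices become decided and \(\mu\) drops by \(5\). Since \autoref{lem:ME_reduction} guarantees that all five chair vertices were undecided before the branch, each of these decrements is genuine. This yields the branching vector \((3, 5, 5, 5)\) and the recurrence
\begin{equation*}
  T(\mu) \le T(\mu - 3) + 3\,T(\mu - 5) + \mathrm{poly}(n).
\end{equation*}

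Finally I would solve this recurrence: its branching number is the unique positive root of \(x^{-3} + 3x^{-5} = 1\), equivalently of \(x^{5} - x^{2} - 3 = 0\). A direct evaluation shows this root lies just below \(1.3734\) (indeed \(x^{5} - x^{2} - 3\) is already positive at \(x = 1.3734\)), so \(T(n) = \OhStar(1.3734^{n})\). The point that needs care --- and the main obstacle --- is justifying that the first branch truly guarantees a decrease of only three: one must confirm that the constraint ``\(a\) and \(c\) are not both in \(C\)'' does not by itself decide either vertex, so that this weakest branch genuinely contributes the dominant term \(T(\mu - 3)\) and hence fixes the base of the exponential. Everything else is routine verification that the forced assignments in each branch are exactly those listed, that \(\mu \le n\), and that the polynomial per-node overhead is absorbed into the \(\OhStar\) notation.
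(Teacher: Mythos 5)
Your proposal is correct and follows essentially the same route as the paper's proof: the same measure \(\mu = |V \setminus (C' \cup I')|\), the same branching vector \((3,5,5,5)\) arising from the four-way branch on \(b\) and \(e\), and the same recurrence \(T(\mu) \le T(\mu-3) + 3T(\mu-5)\), whose root lies just below \(1.3734\). Your added verification that the \(b \in C, e \in C\) branch genuinely leaves \(a\) and \(c\) undecided (so the drop is only \(3\)) matches the paper's case analysis preceding the lemma.
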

\begin{proof}
  Let \(\mu=\mu(G,C',I')=|V(G)\setminus(C'\cup{}I')|\) be a measure function
  that denotes the number of vertices in graph \(G\) which are not yet in the
  sets \(C',I'\). Note that the condition \(\mu\leq{}n\) holds before the
  procedure \textproc{ME-Algorithm} (see \autoref{alg:me}) makes any recursive
  call. The procedure makes recursive calls only when it finds a chair
  \(\mathcal{C}\), all of whose five vertices contribute to the measure \(\mu\).
  The three recursive calls in lines~\(17\), lines~\(20\) and~\(23\) cause
  \(\mu\) to drop by \(5\) each, since all five vertices of \(\mathcal{C}\) are
  moved to \(C'\cup{}I'\) in these calls. The recursive call in line~\(14\)
  results in \(\mu\) dropping by \(3\), since only three
  vertices---\(b,e,d\)---are moved to \(C'\cup{}I'\) by this call.

  The procedure stops recursing only when for each chair in graph \(G\), at
  least one vertex in the chair is in \(C'\cup{}I'\). If a vertex \(x\) in a
  chair \(\mathcal{C}\) is moved to the set \(I'\), then the procedure moves
  every neighbour of \(x\) to the set \(C'\). It follows that when the recursion
  stops, the set \(C'\) forms a vertex modulator of graph \(G\). The procedure
  solves this instance in polynomial time using \textproc{USE-2-SAT} from
  \autoref{alg:use2sat}, which implements \autoref{cor:lmp2sat}.

  Let \(T(\mu)\) denote the number of nodes in the recursion tree of the
  procedure, when invoked with an instance with measure \(\mu\). The recurrence
  for \(T(\mu)\) is then \(T(\mu) = 3T(\mu-5) + T(\mu-3)\). This solves to
  \(T(\mu)\leq{}{({1.3734})}^{\mu}\), and the lemma follows.
\end{proof}

We can convert an instance \(G=(V,E)\) of \probMR to an instance
\((G=(V,E),C',I')\) of \probME by setting \(C'=\emptyset,I'=\emptyset\). Thus we
get:

\hypertarget{proof:exactAlgo}{\repeattheorem{repthm:exactAlgo}}

Combining \autoref{lem:lmp2me} and \autoref{lem:exactruntime} we get:

\hypertarget{proof:exactAlgoLMP}{\repeattheorem{repthm:exactAlgoLMP}}

\section{FPT Algorithms for Monopolarity}
\label{sec:fpt}

In this section we prove
\hyperlink{proof:MRFPTClawFreeVD}{Theorem~\ref{thm:MRFPTClawFreeVD}} and
\hyperlink{proof:MRFPTClawFreeED}{Theorem~\ref{thm:MRFPTClawFreeED}}. We have
seen in \autoref{thm:FPTClawFreeED} that we can find a vertex modulator of graph
\(G\) to claw-free graphs in time $\OhStar({3.076}^{k_v})$ where $k_v$ is the
size of a smallest such modulator of \(G\). It also says that we can find an
\emph{edge} modulator of \(G\) to claw-free graphs in time
$\OhStar({2.253}^{k_e})$ where $k_e$ is the size of a smallest such modulator.
Once we have such sets, we show that we can solve monopolarity in arbitrary
graphs by making use of \autoref{thm:chairhitting2sat}\footnote{Observe that a
  \emph{claw-free} graph is also a \emph{chair-free} graph.}. The idea is to
find such sets and then guess the partition the elements of these sets are
assigned to in any valid monopolar partition and then try to solve monopolar
extension to decide monopolarity.

\hypertarget{proof:MRFPTClawFreeVD}{\repeattheorem{repthm:MRFPTClawFreeVD}}
\begin{proof}
  As proved in \autoref{thm:FPTClawFreeED}, we can solve \probCFVD in time
  $\OhStar({3.076}^{k_v})$, where ${k_v}$ is the smallest number of vertices
  that need to be deleted from $G = (V,E)$ to obtain a \emph{claw-free} graph.
  Let $C'$ be the vertex modulator obtained from the algorithm in
  \autoref{thm:FPTClawFreeED}. The vertices in $C'$ can be either part of $C$ or
  $I$ in a valid monopolar partition of the graph $G$, if it exists. So, we
  branch on the two possibilities -- a vertex $v \in C'$ is assigned to $C$ or
  $v$ is assigned to $I$. Thus we have $2^{k_v}$ different assignments of these
  vertices. We consider each of these assignments as separate instances. In each
  instance, let $\tilde{C}$ and $\tilde{I}$ be the set of vertices in \(C'\)
  that are assigned to $C$ and $I$ respectively. To see if $G$ admits a
  monopolar partition that extends $(\tilde{C},\tilde{I})$, we consider the
  equivalent problem of checking if $G - \tilde{I}$ is $((\tilde{C}\cup
  N(\tilde{I})),\emptyset)$-monopolar extendable by applying~\autoref{lem:aside}
  and $\tilde{I}$ is an independent set.
    \begin{claim}
      If $\tilde{I}$ is an independent set, then the set $\bar{C} = (\tilde{C} \cup{} N(\tilde{I}))$ is a vertex
      modulator to claw-free graphs of the graph $G-\tilde{I}$.\label{claim:clawfreevd}
    \end{claim}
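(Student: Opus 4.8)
The plan is to exploit the fact that claw-freeness is a \emph{hereditary} property: every induced subgraph of a claw-free graph is again claw-free. Since $C'$ is a vertex modulator of $G$ to claw-free graphs (as returned by the algorithm of \autoref{thm:FPTClawFreeED}), the graph $G - C'$ is claw-free by definition. I would therefore try to exhibit $(G - \tilde{I}) - \bar{C}$ as an induced subgraph of $G - C'$, after which claw-freeness of the former is automatic.

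The key step is a bookkeeping observation about which vertices are removed. Recall that the branch partitions the modulator as $C' = \tilde{C} \uplus \tilde{I}$, and that $\bar{C} = \tilde{C} \cup N(\tilde{I})$. Hence forming the graph $(G - \tilde{I}) - \bar{C}$ from $G$ deletes precisely the vertex set
\[
  \tilde{I} \cup \bar{C} = \tilde{I} \cup \tilde{C} \cup N(\tilde{I}) = C' \cup N(\tilde{I}),
\]
which contains all of $C'$. Consequently $(G - \tilde{I}) - \bar{C}$ is an induced subgraph of $G - C'$, and since $G - C'$ is claw-free, heredity gives that $(G - \tilde{I}) - \bar{C}$ is claw-free as well. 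This is exactly the assertion that $\bar{C}$ is a vertex modulator of $G - \tilde{I}$ to claw-free graphs.

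I do not expect any genuine obstacle here; the only points that need care are routine. First, one must confirm that the deleted set really contains $C'$, which holds because $\tilde{C} \subseteq \bar{C}$ and $\tilde{I}$ is deleted outright. Second, one should check that $\bar{C}$ is a legitimate subset of the vertex set of $G - \tilde{I}$, i.e.\ that $N(\tilde{I})$ is disjoint from $\tilde{I}$; this is immediate since $N(\cdot)$ denotes the \emph{open} neighbourhood. I would also remark that the hypothesis that $\tilde{I}$ is an independent set plays no role in establishing the modulator property itself---it is needed only so that \autoref{lem:aside} applies in the surrounding argument that motivates looking at $\bar{C}$ in the first place.
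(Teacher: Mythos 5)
Your proof is correct, but it takes a genuinely different route from the paper's. The paper argues claw by claw, in hitting-set style: every induced claw of \(G\) meets \(C' = \tilde{C} \uplus \tilde{I}\), so a claw avoiding \(\tilde{C}\) contains a vertex of \(\tilde{I}\), and the neighbours of that vertex inside the claw then land in \(N(\tilde{I}) \subseteq \bar{C}\); hence \(\bar{C}\) still meets the claw. (As written, the paper only spells out the case where the \(\tilde{I}\)-vertex is a leaf of the claw, leaving the centre case implicit.) You instead use heredity: the identity \((G - \tilde{I}) - \bar{C} = G - (\tilde{I} \cup \tilde{C} \cup N(\tilde{I})) = G - (C' \cup N(\tilde{I}))\) exhibits the graph in question as an induced subgraph of the claw-free graph \(G - C'\), and claw-freeness is closed under taking induced subgraphs. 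Your approach buys simplicity and robustness---no case analysis on where or how a claw meets \(C'\)---while the paper's argument establishes the slightly stronger (but unused) assertion that \(\bar{C}\) meets every induced claw of \(G\) itself, not just of \(G - \tilde{I}\). Your closing remark is also accurate: the independence of \(\tilde{I}\) plays no role in the modulator claim itself and is needed only for the surrounding application of \autoref{lem:aside}; it is precisely the paper's stronger intermediate assertion that would require independence, since a claw lying entirely inside \(\tilde{I}\) would be missed by \(\bar{C}\), but such claws are irrelevant in \(G - \tilde{I}\).
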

    \begin{proof}
      We know that $C' = \tilde{C} \cup{} \tilde{I}$ is a vertex modulator. If
      every induced claw in $G$ contains a vertex $v \in \tilde{C}$, then it is
      obvious that $C'$ is a vertex modulator. If not, then there is at least
      one induced claw---$uvwx$---that contains vertices only in $\tilde{I}$.
      For simplicity let $\mathcal{C} = uvwx$ be an induced claw where $v \in
      \tilde{I}$ and $u,w,x \notin (\tilde{C} \cup{} \tilde{I})$. Now, $N(v) =
      {u} \implies u \in N(\tilde{I})$. Hence $\bar{C}$ contains at least one
      vertex from every induced claw in $G$. This proves the claim.
    \end{proof}
    So, we check if the graph $G$ with a particular assignment is $((\tilde{C}
    \cup{} N(\tilde{I})),\emptyset)$-monopolar extendable in polynomial time, by
    using the algorithm in \autoref{thm:chairhitting2sat}. The graph $G$ admits
    a monopolar partition iff there is atleast one assignment among the
    $2^{k_v}$ assignments for the vertex modulator $C'$ that can be extended to
    a monopolar partition. Thus it takes $\OhStar(2^{k_v})$ time to check if $G$
    is a monopolar graph given a vertex modulator $C'$. Combined with the
    initial algorithm to find the vertex modulator in time
    $\OhStar({3.076}^{k_v})$, the total running time of the algorithm is
    $\OhStar({3.076}^{k_v}+2^{k_v}) = \OhStar({3.076}^{k_v})$ which proves the
    theorem.
\end{proof}

\hypertarget{proof:MRFPTClawFreeED}{\repeattheorem{repthm:MRFPTClawFreeED}}
\begin{proof}
  As shown in \autoref{thm:FPTClawFreeED}, we can solve \probCFED in time
  $\OhStar({2.253}^{k_e})$, where ${k_e}$ is the smallest number of edges that
  need to be deleted from $G$ to obtain a \emph{claw-free} graph. Let $E'$ be
  the edge modulator to claw-free graphs, obtained from the algorithm in
  \autoref{thm:FPTClawFreeED}, we construct a set $C'$ by arbitrarily picking
  one vertex from each of these edges in the set $E'$.
    \begin{claim}\label{lem:cfedtocfed}
      Let $E$ be an edge modulator to claw-free graphs. Any set $D$ which
      contains at least one endpoint $u$ or $v$ of every edge $uv\in{}E$ is a
      \emph{vertex} modulator to claw-free.
    \end{claim}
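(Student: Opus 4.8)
The plan is to argue by contradiction, reducing a hypothetical claw in $G-D$ to a claw in $G-E$. Suppose $D$ contains at least one endpoint of every edge of the edge modulator $E$ (so $D$ is a vertex cover of the graph $(V,E)$), but that, contrary to the claim, $G-D$ is \emph{not} claw-free. I would then exhibit an induced claw in $G-E$, contradicting the assumption that $E$ is an edge modulator to claw-free graphs.

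First I would fix an induced claw in $G-D$, say on vertices $\{a,b,c,d\}$ with centre $a$ adjacent to the pairwise non-adjacent vertices $b,c,d$. Since this claw lives in $G-D$, all four of its vertices lie outside $D$. Because $G-D$ is an induced subgraph of $G$, the three spokes $ab,ac,ad$ are genuine edges of $G$, and the three pairs $bc,bd,cd$ are genuine non-edges of $G$.

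The key step is the observation that none of the spokes $ab,ac,ad$ can belong to $E$. Indeed, if some spoke $uv$ were in $E$, then by the hypothesis on $D$ the set $D$ would contain $u$ or $v$; but both endpoints lie in $\{a,b,c,d\}$, which is disjoint from $D$, a contradiction. Hence all three spokes survive the deletion of the edges in $E$ and are present in $G-E$. Deleting edges can only remove edges and never create new ones, so the three non-adjacencies $bc,bd,cd$ remain non-edges in $G-E$ as well. Therefore $\{a,b,c,d\}$ induces a claw in $G-E$, contradicting the claw-freeness of $G-E$.

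I do not anticipate a serious obstacle here: the argument is a short reduction once the right observation is isolated. The only point that needs care is the direction of the implication---that a vertex cover of $E$ forces any claw avoiding $D$ to consist entirely of edges \emph{outside} $E$, which is exactly what lets such a claw persist unchanged in $G-E$. Concluding, $G-D$ can contain no induced claw, so $D$ is a vertex modulator of $G$ to claw-free graphs.
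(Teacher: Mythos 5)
Your proof is correct and takes essentially the same approach as the paper's: both argue by contradiction that an induced claw in $G-D$ has all four vertices outside $D$, hence (since $D$ covers every edge of $E$) none of its three spokes lies in $E$, so the claw survives as an induced claw in $G-E$, contradicting that $E$ is an edge modulator. Your write-up merely makes explicit a couple of points the paper leaves implicit, such as the fact that deleting edges cannot create new adjacencies.
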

    \begin{proof}
      Suppose not, then there exists an induced claw---$uvwx$---in $G-D$, and none
      of the edges \(\{uv,uw,ux\}\) is in the set \(E\). Hence, $G-E$ still
      contains the induced claw $uvwx$, which contradicts the assumption that $E$
      is an edge modulator. Thus, $D$ must be a vertex modulator.
    \end{proof}
    Thus the set $C'$ is a vertex modulator and $|C'| \leq k_e$. Now we proceed
    similarly to the proof of \autoref{thm:MRFPTClawFreeVD} to conclude that we
    can decide if $G$ is monopolar given an additional input $C'$ in time
    $2^{k_e}$.

    Combined with the initial algorithm to find the edge modulator in time
    $\OhStar({2.253}^{k_e})$, the total running time of the algorithm is
    $\OhStar({2.253}^{k_e}+2^{k_e}) = \OhStar({2.253}^{k_e})$ which proves the
    theorem.
\end{proof}


\section{Conclusion}
\label{sec:conclusion}

We derive fast exponential-time and \classFPT algorithms for \probMR. Our exact
algorithm for \probMR solves the problem in $\OhStar(1.3734^{n})$ time on graphs
with $n$ vertices, significantly improving the trivial $\OhStar(2^n)$-time
algorithm and the more involved \(\OhStar(3^{\frac{n}{3}})\)-time algorithm. We
also show how to solve the more general problems \probLMP and \probME in
$\OhStar(1.3734^{n})$ time. These are the fastest known exact algorithms for
these three problems. We derive two \classFPT algorithms for \probMR with two
notions of distance from triviality as the parameters. We show that \probMR can
be solved in \(\OhStar(3.076^{k_{v}})\) and \(\OhStar(2.253^{k_{e}})\) time,
where \(k_{v}\) and \(k_{e}\) are, respectively, the sizes of the smallest
vertex and edge modulators of the input graph to claw-free graphs. These results
are a significant addition to the small number of \classFPT algorithms known for
\probMR.

Le and Nevries~\cite{LE20141} showed that if a graph \(G\) is chair-free then
the instance \((G, (C',\emptyset))\) of \probME can be solved in polynomial time
for any vertex subset $C'$ of graph \(G\). We significantly generalize this
result: we show that the instance \((G, (C',\emptyset))\) of \probME can be
solved in polynomial time for \emph{any} graph $G$ if $C'$ is a vertex modulator
of \(G\) to chair-free graphs. This generalization of Le and Nevries' result
forms the basis of our fast algorithms.

\paragraph*{Open problems.} Le and Nevries~\cite{LE20141} show that \probMR can
be solved in polynomial time for a strict superclass of claw-free graphs, namely
the class of \(\{F_{1},F_{2},F_{3}\}\)-free graphs for certain small graphs
\(F_{1},F_{2},F_{3}\). This class includes chair-free graphs. A natural question
to ask is whether we can obtain a faster exact algorithm by proving the
equivalent of \autoref{thm:chairhitting2sat} for \(\{F_{1},F_{2},F_{3}\}\)-free
graphs, and then branching on induced subgraphs \(F_{1},F_{2},F_{3}\). Recall
that algorithms that solve these problem in time which is sub-exponential in
\(n\), are ruled out by ETH. It would therefore be interesting to establish a
lower bound for the base of the exponential term, perhaps based on SETH.

We can ask analogous questions in the parameterized setting as well: whether we
can use the sizes of vertex or edge modulators to $\{F_1, F_2, F_3\}$-free
graphs as the parameters, to get faster \classFPT algorithms. This latter
question is made interesting by the fact that each of the graphs
\(\{F_{1},F_{2},F_{3}\}\) has more vertices and edges than a claw. Using the
known \ProblemName{$d$-Hitting Set} \classFPT algorithms as a black box to solve
\probMR, similar to our algorithms for \autoref{thm:MRFPTClawFreeVD} and
\autoref{thm:MRFPTClawFreeED}, would therefore incur a significantly higher
running time in terms of the respective hitting set sizes. We will need clever
ideas to get around this slowdown.

\section*{Acknowledgements}
We extend our heartfelt gratitude to the anonymous reviewers of an earlier
version of this manuscript. Their detailed and insightful feedback has helped us
greatly improve the presentation.

\newpage
\bibliography{monopolarity}
\newpage
\appendix
\section{Pseudocode}
\label{app:pseudocode}
\begin{algorithm}[!ht]
  \small
  \caption{Algorithm that implements \autoref{thm:chairhitting2sat}}\label{alg:use2sat}
  \begin{algorithmic}[1]
    \Require{A graph $G$ along with a subset of vertices $C'$ such that
      $C'$ is a vertex modulator of \(G\) to chair-free graphs.}
    \Ensure{(\yes, $(C,I)$) if $G$ admits a monopolar partition
      $V = (C \uplus{} I)$, that extends $(C',\emptyset)$, else \no.}
    \Procedure{USE-2-SAT}{$G,C'$}
      \If{$G$ is $C'$-good}
        \State{$F(G,C') \leftarrow$ \probTwoSAT formula as described in \autoref{thm:chairhitting2sat}}
        \If{$F(G,C')$ is satisfiable}\Comment{Use any standard \probTwoSAT algorithm}
          \State{$t \leftarrow \text{ a satisfying assignment to variables (vertices of $G$) of } F(G,C')$}
          \State{$C \leftarrow \{v : t(v) = \text{ False}\}$}
          \State{$I \leftarrow \{v : t(v) = \text{ True}\}$}
          \State{\textbf{return} (\yes, $(C,I)$)}
        \EndIf{}
        \State{\textbf{return} \no}
      \EndIf{}
      \State{$uvw \leftarrow$ a $C'$-bad induced $P_3$ in $G$} \Comment{This is the case where $G$ is $C'$-bad}
      \State{$G' \leftarrow G - \{v\}$} \Comment{Based on \autoref{thm:chairhitting2sat}}
      \State{$X \leftarrow$ \Call{USE-2-SAT}{$G',C'$}}
      \If{$X = \no$}
        \State{\textbf{return} \no}
      \ElsIf{$X = (\yes, (C,I))$}
        \If{$u \in C \land w \in C$} \Comment{Below are the cases discussed in \autoref{claim:reductionRule}}
          \State{$\tilde{C} = C, \tilde{I} = (I \cup \{v\})$}
        \ElsIf{$u \in I \land w \in I$}
          \State{$\tilde{C} = C \cup \{v\}, \tilde{I} = I$}
        \ElsIf{$u \in I \land w \in C$}
          \State{$x \leftarrow $ unique neighbour of $w$ other than $v$}
          \If{$x \in I$}
            \State{$\tilde{C} = C \cup \{v\}, \tilde{I} = I$}
          \Else
            \State{$\tilde{C} = (C \setminus \{w\}) \cup \{v\}, \tilde{I} = I \cup \{w\}$}
          \EndIf{}
        \ElsIf{$u \in C \land w \in I$}
          \State{$x \leftarrow $ unique neighbour of $u$ other than $v$}
          \If{$x \in I$}
            \State{$\tilde{C} = C \cup \{v\}, \tilde{I} = I$}
          \Else
            \State{$\tilde{C} = (C \setminus \{u\}) \cup \{v\}, \tilde{I} = I \cup \{u\}$}
          \EndIf{}
        \EndIf{}
        \State{\textbf{return} (\yes, $(\tilde{C},\tilde{I}))$}
      \EndIf{}
    \EndProcedure{}
  \end{algorithmic}
\end{algorithm}

\begin{algorithm}
  \caption{Exact Exponential-time Algorithm for \probME}\label{alg:me}
  \begin{algorithmic}[1]
    \Require{Graph $G = (V,E)$ and disjoint vertex subsets \(C'\subseteq{}V,I'\subseteq{}V\).}
    \Ensure{(\yes, $(C,I)$) if $G$ admits a list monopolar partition $V = (C \uplus{} I)$ that extends \((C',I')\), else \no.}
    \Procedure{ME-Algorithm}{$G,C',I'$}
      \If{$G$ is a \textit{chair-free} graph \textbf{or} \((C'\cup{}N(I'))\) is
        a vertex modulator of \(G\)}
        \State{$C'\gets{}C'\cup{}N(I')$, $G'\gets{}G-I'$}\label{line:chair_free_block_start} \Comment{Implementation of \autoref{lem:aside}}
        \State \(U\gets\)\Call{USE-2-SAT}{$G',C'$}
        \If{\(U\neq\no\)}
          \State{\((\yes,(C,I))\gets{}U\), \(I \gets I\cup\tilde{I}\)}
          \State{\textbf{return} \((\yes,(C,I))\)}
        \Else{} \textbf{return} \no
        \EndIf{}
      \EndIf{}\label{line:chair_free_block_end}
      \State{Find an induced chair \chair{a,b,c,d,e}\;;\; 
        \(\{a,b,c,d,e\}\cap(C'\cup{}I')=\emptyset\).}
      \State{\(C_{1}\gets{}C'\),\(I_{1}\gets{}I'\)}
      \State{\(C_{1}\gets{}C_{1}\cup\{b,e\}\), \(I_{1}\gets{}I_{1}\cup\{d\}\)} \Comment{Case $b\in C$, $e \in C$}
      \State{\(A_{1}\gets\)\Call{ME-Algorithm}{$G,C_{1},I_{1}$}}
      \State{\(C_{2}\gets{}C'\),\(I_{2}\gets{}I'\)}
      \State{\(C_{2}\gets{}C_{2}\cup\{a,c,d,e\}\), \(I_{2}\gets{}I_{2}\cup\{b\}\)} \Comment{Case $b\in I$, $e \in C$}
      \State{\(A_{2}\gets\)\Call{ME-Algorithm}{$G,C_{2},I_{2}$}}
      \State{\(C_{3}\gets{}C'\),\(I_{3}\gets{}I'\)}
      \State{\(C_{3}\gets{}C_{3}\cup\{b,d\}\), \(I_{3}\gets{}I_{3}\cup\{a,c,e\}\)} \Comment{Case $b \in C$, $e \in I$}
      \State{\(A_{3}\gets\)\Call{ME-Algorithm}{$G,C_{3},I_{3}$}}
      \State{\(C_{4}\gets{}C'\),\(I_{4}\gets{}I'\)}
      \State{\(C_{4}\gets{}C_{4}\cup\{a,c,d\}\), \(I_{4}\gets{}I_{4}\cup\{b,e\}\)} \Comment{Case $b \in I$, $e \in I$}
      \State{\(A_{4}\gets\)\Call{ME-Algorithm}{$G,C_{4},I_{4}$}}
      \If{$A_1 \neq \no$} \textbf{return} \(A_{1}\)
      \ElsIf{$A_2 \neq \no$} \textbf{return} \(A_{2}\)
      \ElsIf{$A_3 \neq \no$} \textbf{return} \(A_{3}\)
      \ElsIf{$A_4 \neq \no$} \textbf{return} \(A_{4}\)
      \Else{} \textbf{return} \no
      \EndIf{}
    \EndProcedure{}
  \end{algorithmic}
\end{algorithm}


\end{document}